\newtheorem{thm}{THEOREM}[section]
\newtheorem*{thm*}{THEOREM}
\newtheorem{prop}[thm]{Proposition}
\newtheorem*{prop*}{Proposition}
\theoremstyle{definition}
\newtheorem{rem}[thm]{Remark}
\DeclareMathOperator{\tr}{tr}
\newcommand{\refeq}[1]{{\rm(\ref{#1})}}
\renewcommand{\eqref}[1]{{\rm(\ref{#1})}}
\newcommand{\vect}[1] {\boldsymbol{{ #1}} }
\newcommand{\bpi}{\boldsymbol{\pi}}
\newcommand{\fV}{\vect{f}}              
\numberwithin{equation}{section}
\newcounter{subequation}
	\newenvironment{subequation}%
	{\addtocounter{equation}{-1}%
	\stepcounter{subequation}%
	\begin{equation}}%
	{\end{equation}%
}
\newcommand{\fa}{\mathfrak{a}}
\newcommand{\bb}{\mathbf{b}}
\newcommand{\ba}{\mathbf{a}}
\newcommand{\bfa}{\boldsymbol{\fa}}
\newcommand{\bc}{\mathbf{c}}
\newcommand{\be}{\mathbf{e}}
\newcommand{\beff}{\mathbf{f}}
\newcommand{\bJ}{\mathbf{J}}
\newcommand{\bK}{\mathbf{K}}
\newcommand{\bL}{\mathbf{L}}
\newcommand{\bn}{\mathbf{n}}
\newcommand{\bP}{\mathbf{P}}
\newcommand{\bs}{\mathbf{s}}
\newcommand{\bT}{\mathbf{T}}
\newcommand{\bu}{\mathbf{u}}
\newcommand{\bX}{\mathbf{X}}
\newcommand{\bx}{\mathbf{x}}
\newcommand{\bY}{\mathbf{Y}}
\newcommand{\btau}{\boldsymbol{\tau}}
\newcommand{\beq}{\begin{equation}}
\newcommand{\eeq}{\end{equation}}
\newcommand{\bseq}{\begin{subequation}}
\newcommand{\eseq}{\end{subequation}}
\newcommand{\Id}{\mathbbm{1}}
\newcommand{\p}{\partial}
\newcommand{\cB}{\mathcal{B}}
\newcommand{\cC}{\mathcal{C}}
\newcommand{\cL}{{\mathcal L}}
\newcommand{\cLph}{{\mathbf L}_{\mbox{\tiny{ph}}}}
\newcommand{\cM}{{\mathcal M}}
\newcommand{\M}{\mathcal{M}}
\newcommand{\muV}{\mu}
\newcommand{\siV}{\boldsymbol{\sigma}}
\newcommand{\bLa}{\boldsymbol{\Lambda}}
\newcommand{\psiPH}{{\psi_{\mbox{\rm\tiny ph}}}}
\newcommand{\psiPHb}{{\overline\psi_{\mbox{\rm\tiny ph}}}}
\newcommand{\psiEL}{{\psi_{\mbox{\rm\tiny el}}}}
\newcommand{\phiPH}{{\phi_{\mbox{\rm\tiny ph}}}}
\newcommand{\psib}{\overline{\psi}}
\newcommand{\mPH}{{m_{\mbox{\Lightning}}}}
\newcommand{\Aset}{{\mathbb A}}
\newcommand{\Cset}{{\mathbb C}}
\newcommand{\Pset}{\mathbb{P}}
\newcommand{\Rset}{{\mathbb R}}
\newcommand{\la}{\lambda}
\newcommand{\De}{\Delta}
\newcommand{\al}{\alpha}
\newcommand{\ga}{\gamma}
\newcommand{\bga}{\boldsymbol{\gamma}}
\newcommand{\ep}{\epsilon}
\newcommand{\ka}{\kappa}
\newcommand{\si}{\sigma}
\newcommand{\Si}{\Sigma}
\newcommand{\nab}{\nabla}
\newcommand{\half}{\frac{1}{2}}
\newcommand{\diag}{\mbox{diag}}
\newcommand{\Span}{\mbox{Span}}
\newcommand{\bna}{\begin{eqnarray}}
\newcommand{\ena}{\end{eqnarray}}
\newcommand{\bea}{\begin{eqnarray*}}
\newcommand{\eea}{\end{eqnarray*}}
\newcommand{\ben}{\begin{enumerate}}
\newcommand{\een}{\end{enumerate}}
\newcommand{\bi}{\begin{itemize}}
\newcommand{\ei}{\end{itemize}}
\newcommand{\Hset}{{\mathbb H}}
\newcommand{\dal}{\raisebox{3pt}{\fbox{}}\,}
\title{Noetherian Conservation Laws for Photons}
\author{\normalsize\sc{Michael K.-H. Kiessling and A. Shadi Tahvildar-Zadeh}\\
	{$\phantom{nix}$}\\[-0.1cm] 
        \normalsize Department of Mathematics\\[-0.1cm]
	Rutgers, The State University of New Jersey (New Brunswick)
}
\date{\today\\ {\em Dedicated to Tom Sideris on the occasion of his 70th birthday}} 
\begin{document}
\maketitle
\begin{abstract} 
We review the formulation of a Lorentz-covariant bispinorial wave function and wave equation for a single photon on a flat background. We show the existence of a 10-dimensional set of conservation laws for this equation, and prove that 8 of these can be 
{used to 
obtain} global, gauge-invariant, ADM-like quantities that together define a covariantly constant self-dual bispinor. 
\end{abstract}
\section{Introduction}
Of central importance to the modern development of the theory of partial differential equations has been the notion of {\em conservation laws} and their connection with continuous symmetries of the equations, as developed in utmost generality by E. Noether in her two celebrated theorems of 1918 \cite{Noe1918}. Noether's two theorems are so general and all-encompassing, and in their theoretical significance so ahead of their time that, more than fifty years after their discovery, manuscripts were still being submitted and published that purported to be ``a generalization of Noether's theorem", while in reality they were special cases\footnote{For a fascinating account of the reception of Noether's theorems by the mathematics and physics communities, and their influence on the development of these sciences, see \cite{YKS2018}.} \cite{Olv1968}.    The search for such conservation laws, whether exact or approximate, whether local or global or even {\em microlocal}, and the {\em a priori} estimates they imply for solutions of hyperbolic (or more generally, evolutionary) partial differential equations, have been an {\em id\'ee fixe} in the works of many of the giants of this field, chief among them F. John, C. Morawetz, W. Strauss, S. Klainerman, D. Christodoulou, J. Shatah, and T. C. Sideris.    

Indeed, a recurrent theme in many of Sideris's seminal works in hyperbolic PDEs is how to compensate for the {\em lack} of a particular symmetry in a system, whose associated conservation law --had it been available-- would have made the deriving of an {\em a priori} estimate straightforward by appealing to previously known results.  The case in point is the pioneering work of  Klainerman \cite{Kla1985} in which he obtained the dispersive estimates that are now named after him, for solutions of the linear wave equation on a flat background.  Klainerman made use of local conservation laws associated with the full 15-dimensional  group of conformal isometries of $3+1$ dimensional Minkowski space, which is generated by 4 translations, 3 spatial rotations, 3 boosts, 4 inverted translations, and one scaling vectorfield.  He used them all because they were all there, not because they were all absolutely necessary.  Sideris was among the first to realize that one could get away with fewer symmetries.  In his work on the Klein-Gordon equation \cite{Sid89} he noted that the scaling is no longer available, and showed that boosts can be used instead.  Later, in his collaboration with Klainerman on equations of elastodynamics \cite{KlaSid}, the reverse was the case: boosts were not available, so scaling had to be used, combined with an ingenious use of weighted norms.  This observation in turn became the cornerstone in Sideris's celebrated works on nonlinear elastodynamics \cite{SidInvent} and \cite{SidAnnals}, published in {\em Inventiones} and the {\em Annals} respectively, as well as the path-breaking work on systems of wave equations with multiple speeds \cite{SidMult}.  

It is worth noting that, the oft-repeated phrase about ``Noether's Theorem"\footnote{When used in singular form, what is meant most often is Noether's First Theorem.} being a result in classical field theory stating that ``isometries of spacetime give you conservation laws, for Lagrangians depending on the fields and their first derivatives" is but a cartoon version of Noether's actual result, which concerns itself with systems of partial differential equations in any number of dependent and independent variables and any number of derivatives, without the need for the existence of a metric for the domain or any notion of isometry. The notion of ``symmetry" developed by Noether was likewise completely general, allowing for symmetries acting only on the independent variables (i.e. symmetries of the domain), or only on the dependent variables (so-called target symmetries), as well as those with a combined action on the Cartesian product of the domain and the target. The latter play a key role in Sideris's works on elastodynamics, where the dynamical object is the vector-valued mapping between the ``undeformed" reference state of a solid and its ``deformed" state in physical space.  Thus Sideris's extensive use of {\em simultaneous rotations} is an instance of such ``product" Noether currents.

Another key situation arises in tensorial equations, in which the target space is a vector or principal bundle over the domain manifold. This is the realm of Noether's Second Theorem, which concerns itself with what nowadays are called ``gauge symmetries."  These lie at the heart of today's fundamental physical theories such as the Standard Model of particle physics. For a careful expository account of Noether's work using modern terminology that includes the important case of Lagrangian theory for sections of vector bundles, see Christodulou's beautiful monograph \cite{Chr2000}.

It is a fair question to ask, what to do when there are {\em no} symmetries available?  This is most famously the case when one studies Einstein's equations of general relativity as a system of PDEs, where the unknown is the spacetime itself, which --when not vacuum-- has no a priori reason to have any symmetries.  For the important subset of {\em asymptotically flat} spacetimes, however, there is a notion of symmetry available ``at infinity."  It was already known to Einstein \cite{Ein1916} and Weyl \cite{WeyRZM}, and later Landau and Lifschitz \cite{LanLifCFT}, that such spacetimes will have {\em global} conserved quantities, and it was shown by Arnowitt, Deser and Misner \cite{ADM} that these are in the form of integrals over the ``sphere at infinity'' of a spatial slice, and that they are independent of the asymptotic time variable.  For a rigorous derivation of the ADM conservation laws for Einstein's equations using Noether's Theorems see \cite{MPGR1}. 

Our goal in this paper was to obtain a set of gauge-independent conservation laws for a particular quantum mechanical equation that we have previously proposed \cite{KTZ2017} as the relativistic equation of evolution for the wave function of a single photon, i.e. a massless spin-one particle with no longitudinal modes. Our hope was to obtain these on a spacetime background with no assumed symmetries. {One can see that the requirement of gauge-independence limits one at best to global quantities that can be defined on an asymptotically flat spacetime}, in analogy to the ADM quantities of General Relativity. Further restrictions coming from the need for the existence of covariantly constant tensorfields at the moment seem to limit our conclusions to flat spacetimes, {yet we end up with new global conserved quantities for our photon wave equation on} Minkowski space! We believe this is further evidence that, nearly 100 years after their discovery, Noether's Theorems continue to remain relevant to mathematical physics and be a source of inspiration (and surprise) to its practitioners, just as they have been to Sideris and many others before him. 
\section{Review of Photon Wave Function and Equation on Minkowski Space}
\subsection{Algebraic Preliminaries}
\subsubsection{The Clifford algebra associated with a spacetime} 
Of central importance to the relativistic formulation of quantum mechanics in $d$ space dimensions is the {\em  spacetime algebra} $\Aset$, defined as the complexification of the real Clifford algebra $\mbox{Cl}_{1,d}(\Rset)$ associated with the Minkowski quadratic form of signature $(+,-,\dots,-)$.  For $d=3$ this algebra is easily seen to be isomorphic to the algebra of $4\times 4$ complex matrices, as well as the algebra of $2\times 2$ matrices with complex quaternionic entries:
$$\Aset := \mbox{Cl}_{1,3}(\Rset)_\Cset \cong M_4(\Cset) \cong M_2(\Pset) \cong \Pset \otimes \Pset,$$
where $\Pset := \Cset \otimes \Hset$ is the algebra of complex quaternions, also known as the Pauli algebra \cite{HestenesBook}.  (All tensor products are over $\Cset$ unless otherwise noted.)

 The isomorphism of $\Aset$ with $M_4(\Cset)$ can be realized by choosing a basis for $\mbox{Cl}_{1,3}(\Rset)$ and taking complex linear  combinations of the basis elements. 
 A convenient basis for the real algebra is formed  by the so-called Dirac gamma-matrices (in their Weyl representation) and their products: 
 Let $\mathds{1}_n$ denote the $n\times n$ identity matrix, and define
\beq\label{def:gammas}
\ga^0 = \left( \begin{array}{cc}0 & \mathds{1}_2 \\ 
\mathds{1}_2 & 0 \end{array} \right),\qquad \ga^k = \left( \begin{array}{cc}0 & -\si_k \\
 \si_k & 0 \end{array} \right),\ k=1,2,3,
\eeq
where the $\si_{k\in\{1,2,3\}}$ are the three conventional Pauli matrices. 
 The $\ga$-matrices satisfy the Clifford algebra relations
\beq \label{gammasON}
\ga^\mu \ga^\nu + \ga^\nu \ga^\mu = 2 \eta^{\mu\nu}\Id_4;
\eeq
where the $\eta^{\mu\nu}$ are the components of the Minkowski metric tensor 
\beq
\boldsymbol{\eta} = \diag(1,-1,-1,-1).
\eeq 

Any Clifford algebra $A$ associated with a vector space $V$ over a field $F$ contains a subspace that is isomorphic to $V$. 
 The elements of that subspace are called {\em 1-vectors}.  
Note that \refeq{gammasON} simply states that the four matrices $\{\ga^\mu\}_{\mu=0}^3$ form a Lorentz-orthonormal set of 
1-vectors in the Clifford algebra $\mbox{Cl}_{1,3}(\Rset)$.  

 By definition, a {\em $k$-vector} is the (Clifford) product of $k$ elements, each one of which is a 1-vector. 
 Let $\{\be_j\}_{j=1}^n$ be a basis for $V$. Every Clifford number $a\in A$ has a $k$-vector expansion of the form 
\beq a = a_S\Id + \sum_{k=1}^{n}\sum_{1\leq i_1<\dots<i_k\leq n} a^{i_1\dots i_k} \be_{i_1}\dots\be_{i_k},
\eeq
where $n=\dim_F V$ and the coefficients $a_S,a^{i_1\dots i_k}$ are in $F$.  
 It follows that the following is a basis for the (16-dimensional) algebra $\mbox{Cl}_{1,3}(\Rset)$:
\beq\label{basis}
B := \left\{ \mathds{1}_4; \ga^0, \ga^1, \ga^2,\ga^3; \ga^0\ga^1, \ga^0\ga^2,\dots; \ga^0\ga^1\ga^2, \dots;\ga^0\ga^1\ga^2\ga^3 \right\},
\eeq
and therefore its complexification can be obtained by taking the coefficients of the expansion to be complex numbers: $\Aset = \Span_\Cset B$.  

 The complexified algebra $\Aset$ in particular includes the {\em pseudoscalar}
\beq 
\ga^5 := i \ga^0 \ga^1 \ga^2 \ga^3 
= \begin{pmatrix} \, \Id_2 & \ 0 \\ 0 & -\Id_2\end{pmatrix},
\eeq
and therefore the projections
\beq\label{def:projections}
\Pi_\pm := \half( \Id_4 \pm \ga^5).
\eeq
Using these projections it follows right away that $\Aset$ contains all $4\times 4$ matrices, 
and it is easy to verify that $\Aset$ and $M_4(\Cset)$ are indeed isomorphic as algebras, with the 
Clifford multiplication given by matrix multiplication.

 Let $a = a_S \Id +\sum_I a_I \bga^I$ denote the $k$-vector expansion of $a\in \Aset$. 
 Thus $a_S,a_I\in \Cset$ and each $\bga^I$ is a $k$-fold product of gamma matrices, for some $k$. 
 Two important operations on Clifford numbers are the following:  
\begin{itemize}
\item 
The {\em scalar part} $a_S$ of an element $a \in \Aset$ is by definition the coefficient of the unit element $\Id$ in the expansion of $a$ in 
any basis (such as $\cB$.)  
Using the isomorphism above, we can view $a$ as a $4\times 4$ matrix, and we then have
\beq \label{def:Spart}
a_S = \frac{1}{4}\tr a,
\eeq
where $\tr$ denote the usual operation of taking the trace of a matrix.
\item 
The {\em conjugate reversion} (a.k.a. {\em Dirac adjoint}) $\overline{a}$ of  $a\in\Aset$ is by definition the element 
obtained by reversing the order of multiplication of the 1-vectors in the expansion of $a$ in terms of $k$-vectors, 
and taking the complex conjugate of the coefficients in that expansion. 
 Thus
$
\overline{a} = a_S^* \Id + \sum_I a_I^* \tilde{\bga}^I 
$
with $\tilde{\bga}^I = \ga^{i_k}\dots\ga^{i_1}$ whenever $\bga^I = \ga^{i_1}\dots\ga^{i_k}$. 
 Using the isomorphism $\Aset \cong M_4(\Cset)$ it is not hard to see that, when $a$ is viewed as a $4\times 4$ matrix,
\beq\label{def:Diradj}
\overline{a} = \ga^0 a^\dag \ga^0,
\eeq
where $a^\dag = (a^*)^{\mathrm{T}}$ denotes the conjugate-transpose of $a$. 
(Here and elsewhere, ${}^*$ denotes complex conjugation $i\to -i$, while ${}^{\mathrm{T}}$ denotes the matrix transpose.)
\end{itemize}

The isomorphism $\Aset \cong \Pset \otimes \Pset$ on the other hand, can be realized by first noting that the complexification of the real Clifford algebra $\mbox{Cl}_{1,3}(\Rset)$ yields the complex Clifford algebra $\mbox{\bf Cl}(4)$, which is known to be isomorphic to $\mbox{\bf Cl}(2) \otimes \mbox{\bf Cl}(2)$, and that $\mbox{\bf Cl}(2)$ is in turn isomorphic to $\Pset$. 

Finally, since $\Pset \cong M_2(\Cset)$, by partitioning a $4\times 4$ matrix into four $2\times 2$ blocks in the obvious way, we may view an element  $\Phi \in \Aset \cong M_4(\Cset)$ as a $2\times 2$ matrix with entries in $\Pset$:
\beq\label{def:PhotonWF}
\Phi = \begin{pmatrix} \phi_+ & \chi_- \\ \chi_+ & \phi_- \end{pmatrix},\qquad \phi_\pm, \chi_\pm \in \Pset \cong M_2(\Cset).
\eeq
\subsubsection{Spinors and differential forms}
In \cite{KTZ2017} it was shown that there is a correspondence between rank-one spinors and 1-forms, and similarly between {\em trace-free} rank-two spinors and 2-forms.  Specifically, for every rank-two bispinor 
$\psiPH=\begin{pmatrix}
    \phi_+ & \chi_-\\ \chi_+ & \phi_-
\end{pmatrix} $ that satisfies the trace condition $\tr \phi_+ = \tr \phi_- = 0$, there exists two real-valued 2-form $\mathfrak{f}_\pm$ and two complex-valued 1-forms $\mathfrak{a}_\pm$ such that
$$ \phi_+ =\Sigma(\mathfrak{f}_+),\quad \phi_- = \Sigma'(\mathfrak{f}_-),\quad \chi_+ = \sigma'(\mathfrak{a}_+),\quad \chi_- = \sigma(\mathfrak{a}_-),$$
where the (invertible) mappings $\si,\si',\Si,\Si'$ are defined as follows: Let $\si_0 = \si'_0 = \Id_{2\times 2}$.  Let $\si_1,\si_2,\si_3$ denote the Pauli matrices, and let $\si'_k := - \si_k$ for $k=1,2,3$.  Then we define, for contravariant tensors $A$ and $F$
$$
\si(A) := \si_\mu A^\mu,\quad \si'(A) := \si'_\mu A^\mu,\quad \Si(F) := \frac{i}{4}\si_\mu {\si'}_\nu F^{\mu\nu},\quad \Si'(F) := \frac{i}{4}\si'_\mu {\si}_\nu F^{\mu\nu}.
$$
The action on other tensor types can then be defined via the musical isomorphism $\sharp$, i.e. raising of subscript indices using the (inverse) spacetime metric.

\subsubsection{Lorentz group $O(1,3)$ and its spinorial representation}
The group of spacetime rotations of $\Rset^{1,3}$ is the Lorentz group $O(1,3)$.  This group is disconnected: it has four connected components.  The connected component that contains the identity element is called the {\em proper} Lorentz group.  Viewed as a matrix group, the proper Lorentz group is identified with $SO_0(1,3)$, the subgroup of matrices $\bLa \in O(1,3)$ with  $\det \bLa = 1$ and $\bu^T \bLa \bu>0$ for all future-directed timelike vectors $\bu \in \Rset^{1,3}$.

The full Lorentz group $O(1,3)$ is generated by elements of $SO_0(1,3)$ together with the space-reflection
\begin{equation}\label{def:P}
    \bP := \left(\begin{array}{cc} 1 & 0 \\ 0 & -\mathds{1}_3 \end{array}\right)
\end{equation} 
and time-reversal $\bT := - \bP = \left(\begin{array}{cc} -1 & 0 \\ 0 & \mathds{1}_3 \end{array}\right)$.
For $\bx \in \Rset^{1,3}$ let 
\beq\label{def:gax}
\ga(\bx) := \ga_\mu x^\mu  = \begin{pmatrix}
0 & x^0\mathds{1}+x^i\si_i \\ x^0\mathds{1} - x^i\si_i & 0 
\end{pmatrix} = \begin{pmatrix} 0 & \si(\bx) \\ \si'(\bx) & 0 \end{pmatrix}
\in \Aset
\eeq
be the image of $\bx$ under the standard embedding of the Minkowski spacetime into its Clifford algebra (Note that indices are raised and lowered using the Minkowski metric $\eta$, and that we are using Einstein's summation convention).   It is a standard result of the representation theory of the Lorentz group that for every $\bLa\in SO_0(1,3)$ there exists $A = A(\bLa) \in SL(2,\Cset) \subset \Pset$ such that
$$
\ga(\bLa \bx) = L_{\bLa} \ga(\bx) L_{\bLa}^{-1}
$$
where 
\beq \label{def:Spingroup}
L_{\bLa} = \left(\begin{array}{cc}A & 0 \\ 0 & A^* \end{array}\right) \in \Aset.
\eeq
(Recall that since $\det A = 1$ we have $A^* = \tilde{A}^\dag = A^{-\dag}$.)

It is also easy to check that 
$$
\ga(\bP \bx) = \ga^0 \ga(\bx) \ga^0
$$
so that, as an operator on $\Cset^4$, we can set 
\beq\label{def:LP}
L_\bP = \ga^0.
\eeq
There are of course other choices for $L_\bP$, for example $L_\bP = i \ga^0$.  In fact we know that, since the Lorentz group is not connected, it does not have a unique covering group: it has eight non-isomorphic coverings.  In four of these, the representation of the time-reversal operator $\bT$ is unitary, and in the other four it is anti-unitary (see \cite{ThallerBOOK}, Thm. 3.~10.) Here we are going to make a convenient choice for $L_\bT$ that makes it anti-unitary, while keeping the Dirac equation (see below) covariant under the action of the full Lorentz group: 
\beq\label{def:LT}
L_\bT := -i S_2 \cC,\eeq
 where \beq\label{def:Sk}
 S_k := \begin{pmatrix}  \si_k & 0 \\ 0 & \si_k \end{pmatrix},
\eeq
are components of the {\em spin} operator,
and $\cC$ denotes the complex-conjugation operator in $\Aset \cong M_2(\Pset)$, i.e. 
\beq \cC \psi := \begin{pmatrix} \phi_+^* & \chi_-^* \\ 
\chi_+^* & \phi_-^*
\end{pmatrix}.
 \eeq
We now have the projective representation of $O(1,3)$ as the group generated by matrices of the form $L_{\bLa}$ as in \eqref{def:Spingroup} together with $L_\bP$ and $L_\bT$.  

The {\em Dirac opertor} $\slashed{D}$ on $\Rset^{1,d}$ is by definition
\beq
 \slashed{D} := \ga(\p) = \ga^\mu \frac{\p}{\p x^\mu}.
\eeq
Thus for $d=3$ we have
\beq\label{def:Dpm}
\slashed{D} = \left(\begin{array}{cc} 0 &  \mathds{1}_2\p_0 - \siV\cdot\nab \\ \mathds{1}_2\p_0 +\siV\cdot \nab & 0 \end{array}\right) =: \begin{pmatrix} 0 & D_- \\ D_+ & 0\end{pmatrix}.
\eeq
Thus
$$ D_- := \si(\p),\qquad D_+ := \si'(\p)
$$
(since $\p^k = -\p_k$ for $k=1,2,3$.)  Note that $\slashed{D}^2 = \mathds{1}_4\dal$ and $D_+D_-=D_-D_+ = \Id_2\dal$ where $\dal := \p_0^2 - \De_{\Rset^3}$ is the three-dimensional wave operator.

\subsection{Relativistic wave functions and equations}

\subsubsection{Photon wave function and equation}
According to Kiessling \& Tahvildar-Zadeh \cite{KTZ2017}, in $d$ space dimensions the wave function of a single photon is a rank-two bi-spinor field on $\Rset^{1,d}$ which, when viewed as a linear transformation, has trace-free diagonal blocks.  In the case $d=3$, rank-two bi-spinors are the same as {\em Clifford numbers}, i.e. general elements of the algebra $\Aset$, and thus of the form \refeq{def:PhotonWF}, while the trace-free condition implies 
\beq\label{eq:tracecond}
\tr \phi_+ = \tr \phi_- = 0.
\eeq
Here $\tr$ means the matrix trace, or equivalently, twice the scalar part of the quaternion.
\begin{rem}
    The trace condition \eqref{eq:tracecond} projects out the spin-zero sector. An arbitrary element of $\mathbb{A}$ is a mixture of spin-zero and spin-one fields.
\end{rem}
 Thus in three space dimension the wave function of a single photon has four quaternionic components, or 16 complex components, two of which have to be zero because of \refeq{eq:tracecond}:
\beq \label{def:PWF}
\psiPH = \left(\begin{array}{cc} \phi_+ & \chi_- \\ \chi_+ & \phi_- \end{array}\right).
\eeq
Moreover, according to \cite{KTZ2017} the photon wave function satisfies a Dirac-type equation with a projection term:
\beq\label{eq:DirPH}
-i\hbar \slashed{D} \psiPH + \mPH\Pi \psiPH = 0
\eeq
where $\Pi$ is the projection onto diagonal blocks (defined in terms of the projections $\Pi_\pm$ previously defined in \refeq{def:projections}):
\beq\label{def:projs}
\Pi \Psi := \Pi_+ \Psi \Pi_+ + \Pi_- \Psi \Pi_-,\qquad \forall \Psi \in \Aset,
\eeq
$\hbar$ is Planck's constant, and $\mPH>0$ a dimensional constant to be determined. (The speed of light has been set equal to one.)

The equation \refeq{eq:DirPH} is the Euler--Lagrange equation for an action functional with the real scalar Lagrangian density function given by
\beq \label{Lagrangian}
\ell_{\mbox{\tiny ph}}=  
 \frac{\hbar}{16\pi i} \tr\left({\psiPHb} \gamma^\mu_{} \p_\mu^{} \psiPH - \p_\mu \psiPHb \ga^\mu \psiPH\right) 
+ \frac{\mPH}{8 \pi} \tr \left( \psiPHb \Pi \psiPH \right).
\eeq
\begin{rem}
The massive version of \eqref{eq:DirPH}, i.e. where the projection operator is replaced by the identity, was already written down by M. Riesz a long time ago \cite{Rie1946}:
\beq\label{eq:MassiveSpin1}
-i\hbar \slashed{D} \psi + m \psi = 0,\qquad \psi\in M_4(\Cset).
\eeq
This is the Euler--Lagrange equation for an action functional with real scalar Lagrangian density given by
\beq \label{LagrangianMassive}
\ell=  
 \frac{\hbar }{16\pi i} \tr\left({\psib} \gamma^\mu_{} \p_\mu^{} \psi - \p_\mu \psib \ga^\mu \psi\right) 
+ \frac{m }{8 \pi} \tr \left( \psib  \psi \right).
\eeq
See the Appendix at the end of this paper for a brief account of the genesis of the photon wave function, 
and its precursors in the works of Riesz and Harish-Chandra. For a fuller account, and a review of all the previous attempts at constructing a photon wave function and wave equation, see \cite{KTZ2017}.
\end{rem}

\subsubsection{Consequences of the photon wave equation}

In terms of the wave function's quaternionic constituents, \refeq{eq:DirPH}  can be written as 
\beq\label{eq:masslessDir}
\left\{\begin{array}{rcl} 
D_\pm \chi_\mp & = & -\frac{i\mPH}{\hbar} \phi_\mp \\
D_\pm \phi_\pm & = & 0.
\end{array}\right.
\eeq
Note that the above implies that {\em all} components of $\psiPH$ satisfy the massless linear Klein-Gordon (a.k.a. classical wave) equation:
\beq\label{eq:waves}
\dal \phi_\pm = 0,\qquad \dal \chi_\pm = 0.
\eeq

Let
$$\rho: O(1,3) \to SL(2,\Cset)\times SL(2,\Cset),\qquad \rho(\bLa) = L_{\bLa}$$
denote the projective representation of the Lorentz group given in the above. It is easy to check that \refeq{eq:DirPH} is covariant with respect to the action of the full Lorentz group, i.e., given $\psiPH(x)$ any solution of \refeq{eq:DirPH} and any $\bLa\in O(1,3)$, the spinor

\beq\label{eq:lorcov2}
\psiPH'(x) := L_{\bLa} \psiEL( \bLa^{-1}x )L_{\bLa}^{-1}
\eeq
is also a solution of \refeq{eq:DirPH}.
This implies that the components $\phi_\pm,\chi_\pm$ of $\psiPH$ transform in the following way: For $\bLa\in SO_0(1,3)$ we have
\beq\label{eq:lorpropPH}
\phi_+ \to A\phi_+\tilde{A},\qquad \phi_- \to A^* \phi_-A^\dag,\qquad \chi_+\to A^*\chi_+\tilde{A},\qquad \chi_- \to A\chi_-A^\dag
\eeq
where $A\in SL(2,\Cset)$ is such that $L_{\bLa} = \left(\begin{array}{cc} A & 0 \\ 0 & A^{-\dagger} \end{array}\right)$ is the projective representation of $\bLa$, while under $\bLa = P$ the space reflection (parity) transformation, using (\ref{eq:lorcov2},\ref{def:LP}) we have
\beq\label{eq:lorparitPH}
\phi_\pm \to \phi_\mp,\qquad \chi_\pm\to \chi_\mp.
\eeq
Finally, under $\bLa = T$ the time-reversal transformation, using (\ref{eq:lorcov2},\ref{def:LT}) we have
\beq\label{eq:lortimePH}
\phi_\pm \to \phi_\pm^*,\qquad \chi_\pm \to \chi_\pm^*.
\eeq
Note that this in particular implies that $\psiPH^* := \begin{pmatrix} \phi_+^* & \chi_-^*\\ \chi_+^* & \phi_-^*\end{pmatrix}$ solves the time-reversed version of the equation \refeq{eq:DirPH}.

Let $\accentset{\circ}{\psiPH} = \left(\begin{array}{cc} \accentset{\circ}{\phi_+} & \accentset{\circ}{\chi_-}\\ \accentset{\circ}{\chi_+} & \accentset{\circ}{\phi_-}\end{array}\right)$ be initial data supplied on the Cauchy hypersurface $\{x^0 = 0\}$ for \refeq{eq:DirPH} that is subject to compatibility conditions $\tr\accentset{\circ}{\phi}_\pm = \tr\left((\siV\cdot \nab)\accentset{\circ}{\phi}_\pm\right) = 0$. The corresponding initial value problem for $\psiPH$ is equivalent to the following Cauchy problems for the classical wave equation:
\beq\label{eq:WEs}
\left\{\begin{array}{rcl} \dal \phi_\pm  &= & 0 \\
\phi_\pm|_{x^0 = 0} &=& \accentset{\circ}{\phi_\pm} \\
\p_0\phi_\pm|_{x^0 = 0} & = &\mp \siV\cdot\nab \accentset{\circ}{\phi_\pm}\end{array}\right.
\qquad
\left\{\begin{array}{rcl} \dal \chi_\pm  & = & 0 \\
\chi_\pm|_{x^0 = 0} &=& \accentset{\circ}{\chi_\pm} \\
\p_0\chi_\pm|_{x^0 = 0} & = & \mp\frac{i\mPH}{\hbar} \accentset{\circ}{\phi_\pm} \pm \siV\cdot\nab \accentset{\circ}{\chi_\pm}\end{array}\right.
\eeq
\subsubsection{The diagonal blocks of the photon wave function}
We now establish that the diagonal blocks $\Pi\psiPH$ of the photon wave function propagate only in transversal modes:
As explained in \cite{KTZ2017}, by fixing a Lorentz frame for the Minkowski space, one can find $\be_\pm, \bb_\pm: \Rset^{1,3}\to \Rset^3$ such that
\beq\label{eq:insidepsi}
\phi_+ = i \siV \cdot (\be_+ + i \bb_+),\qquad \phi_- = - i \siV \cdot (\be_- - i \bb_-).
\eeq
Thus setting $\fV_\pm := \be_\pm + i \bb_\pm$, in this frame
the equation for the diagonal blocks become
\beq 
\begin{array}{lcl}\label{DIRACeqSPLITdiag}
\phantom{\hbar} (\p_t + \siV\cdot\nab)(\siV\cdot\fV_+) &=& 0, \\
\phantom{\hbar} (\p_t - \siV\cdot\nab)(\siV\cdot\fV_-^*) &=& 0.
\end{array}
\eeq
Now, it is well-known \cite{LapUhl1931,OppiPHOTON,IBBphotonREV} that the equation
\beq \label{eq:Weyl}
(\p_t + \siV\cdot\nab) (\siV\cdot\fV) = 0
\eeq
for $\fV :=  \be + i \bb:\Rset^{1,3}\to\Cset^3$
is \emph{formally} equivalent to the Maxwell system of equations for a source-free electric field $\be$ and a 
magnetic induction field $\bb$ in the given Lorentz frame, viz.
\beq \label{MaxEB}
\begin{array}{ll}
&\p_t\be - \nab\times\bb = 0, \qquad \nab\cdot \be = 0,\\
&\p_t\bb + \nab\times\be = 0, \qquad \nab\cdot \bb = 0.
\end{array}
\eeq
 Thus, the first equation in (\ref{DIRACeqSPLITdiag}) and the second equation
in (\ref{DIRACeqSPLITdiag}) each {\em separately} are equivalent to (\ref{MaxEB}).
 This proves the absence of longitudinal modes in (\ref{DIRACeqSPLITdiag}). It also establishes that to describe a photon wave function fully, even in a fixed frame, a {\em pair} of Maxwell fields (solutions of \eqref{MaxEB}) are needed, one will not be enough.

\subsubsection{The off-diagonal components of the photon wave function.}\label{sec:offdiag}

More generally, it was established in \cite{KTZ2017} that 
\beq
\chi_+ = \si'(\bfa_+),\qquad \chi_- = \si(\bfa_-),
\eeq
where $\bfa_\pm$ are complex-valued 1-forms on the configuration Minkowski space. 
Even though this  appears to imply that the off-diagonal terms have 8 complex, or 16 real degrees of freedom, 
it turns out that half of those are due to {\em gauge} freedom, which for \refeq{eq:DirPH} consists of 
\begin{equation}\label{eq:DIRACmZEROeqnGAUGE}
    \psiPH \mapsto \psiPH + (\Id - \Pi)\Upsilon,\qquad \slashed{D}\Upsilon = 0.
\end{equation}
 More precisely, the following was shown in \cite{KTZ2017}:
\begin{prop*}
Let $\psiPH = \left(\begin{array}{cc} \psi_+ & \chi_-\\ \chi_+ & \psi_- \end{array} \right)$ be a solution of \refeq{eq:DirPH}. 
 There exists a gauge transformation \refeq{eq:DIRACmZEROeqnGAUGE},
 such that after applying it, the $\chi_\pm$ are Hermitian matrices (equivalently, 
the $\bfa_\pm$ are real-valued.)
\end{prop*}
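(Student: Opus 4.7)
The plan is to construct the gauge parameter explicitly in the $1$-form representation of the off-diagonal blocks. Write $\chi_+ = \si'(\bfa_+)$ and $\chi_- = \si(\bfa_-)$ with complex $1$-forms $\bfa_\pm = \ba_\pm + i\bb_\pm$, where $\ba_\pm,\bb_\pm$ are real; since the $\si_\mu$ and $\si'_\mu$ are all Hermitian matrices, $\chi_\pm$ is Hermitian precisely when $\bb_\pm\equiv 0$. Since $(\Id-\Pi)\Upsilon$ affects only the off-diagonal blocks, and since taking $\Upsilon$ with vanishing diagonal blocks reduces $\slashed{D}\Upsilon=0$ to the two conditions $D_-\eta_+=0$ and $D_+\eta_-=0$, the task becomes: find $\eta_\pm\in\ker D_\mp$ so that $\chi_\pm+\eta_\pm$ is Hermitian.

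The key algebraic input is the Clifford relation $\si_\mu\si'_\nu+\si_\nu\si'_\mu=2\eta_{\mu\nu}\Id$, which decomposes the Dirac operator acting on a $1$-form representation into a trace (divergence) piece and a trace-free (exterior derivative) piece:
$$
D_-\si'(\bc) = (\p^\mu c_\mu)\Id - 2i\,\Sigma(d\bc),\qquad D_+\si(\bc) = (\p^\mu c_\mu)\Id - 2i\,\Sigma'(d\bc).
$$
Substituting these into the off-diagonal parts of the field equation, namely $D_-\chi_+ = -(i\mPH/\hbar)\Sigma(\mathfrak{f}_+)$ and $D_+\chi_- = -(i\mPH/\hbar)\Sigma'(\mathfrak{f}_-)$, and invoking that the $2$-forms $\mathfrak{f}_\pm$ furnished by the bispinor-to-form dictionary are \emph{real}-valued, one compares trace parts to force the Lorenz condition $\p^\mu a_{\pm,\mu}=0$, and compares trace-free parts (using injectivity of $\Sigma,\Sigma'$) to obtain $d\bfa_\pm = (\mPH/2\hbar)\mathfrak{f}_\pm$, which is likewise real. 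Splitting the identities $d\bfa_\pm\in i^0\Rset$ and $\p^\mu a_{\pm,\mu}=0$ into real and imaginary pieces yields the crucial two conclusions $d\bb_\pm=0$ and $\p^\mu b_{\pm,\mu}=0$; i.e., each $\bb_\pm$ is closed \emph{and} co-closed.

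On the simply connected flat background, closedness plus the Poincar\'e lemma supplies real scalars $\mu_\pm$ with $d\mu_\pm=\bb_\pm$, while the co-closed condition becomes $\dal\mu_\pm=0$. Set $\lambda_\pm := -i\mu_\pm$ and define
$$
\eta_+ := D_+\lambda_+ = \si'(d\lambda_+),\qquad \eta_- := D_-\lambda_- = \si(d\lambda_-).
$$
Applying $D_-$ (resp.\ $D_+$) and using $D_- D_+ = D_+ D_- = \dal\,\Id$ yields $D_-\eta_+ = (\dal\lambda_+)\Id = 0$ and $D_+\eta_- = 0$, so the off-diagonal bispinor $\Upsilon$ with blocks $\eta_\pm$ satisfies $\slashed{D}\Upsilon = 0$ and defines an admissible gauge; after applying it, $\bfa_\pm\mapsto\bfa_\pm+d\lambda_\pm=\ba_\pm$ is real, so $\chi_\pm$ become Hermitian. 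The heart of the argument—and the main obstacle—is the second paragraph: the algebraic decomposition of $D_\pm$ on $1$-forms, \emph{together with} the reality of $\mathfrak{f}_\pm$, is what simultaneously produces the Lorenz gauge and forces $d\bb_\pm=0$; only with both in hand does the Poincar\'e lemma furnish a wave-equation solution $\mu_\pm$—hence a valid gauge parameter—that removes the imaginary part of $\bfa_\pm$.
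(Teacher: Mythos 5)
Your reduction of the problem to the $1$-form picture, and the identity $D_-\si'(\bc) = (\p^\mu c_\mu)\Id - 2i\,\Sigma(d\bc)$, are both correct, but the step you yourself flag as the heart of the argument contains a genuine error: $\Sigma$ and $\Sigma'$ are \emph{not} injective on complex-valued $2$-forms. Each is a real-linear bijection from the $6$-real-dimensional space of \emph{real} $2$-forms onto the $6$-real-dimensional space of trace-free $2\times 2$ matrices; its complex-linear extension maps a $6$-complex-dimensional space onto a $3$-complex-dimensional one and therefore has a $3$-complex-dimensional kernel (the anti-self-dual, resp.\ self-dual, complex $2$-forms). Hence from $\Sigma(d\bfa_+) = \frac{\mPH}{2\hbar}\Sigma(\mathfrak{f}_+)$ you may only conclude that $d\bfa_+ - \frac{\mPH}{2\hbar}\mathfrak{f}_+$ lies in $\ker\Sigma$, not that it vanishes; writing $\bfa_+ = \ba + i\bb$ this yields $d\bb = \pm\,{*}\left(d\ba - \tfrac{\mPH}{2\hbar}\mathfrak{f}_+\right)$ rather than $d\bb = 0$, so the Poincar\'e lemma does not supply your potential $\mu_+$ and the gauge parameter $\lambda_+$ does not exist in general. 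The failure is not cosmetic: were your claim true, every element of $\ker D_-$ of the form $\si'(\bc)$ would have $d\bc=0$, i.e.\ $\ker D_-$ would be exhausted by $\si'(d\lambda)$ with $\dal\lambda=0$ (two complex functions of Cauchy data), whereas $D_-\eta=0$ is an unconstrained first-order system carrying four complex functions of Cauchy data; the discrepancy is exactly the kernel of $\Sigma$ that you have suppressed. Concretely, any nontrivial anti-self-dual complex Maxwell field written as $d\bc$ with $\p^\mu c_\mu=0$ gives a $\chi_+=\si'(\bc)\in\ker D_-$ whose $1$-form has non-closed imaginary part.

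The proposition is still true, but the requisite gauge transformation must in general modify the \emph{real} part of $\bfa_\pm$ as well, so a purely exact correction $d\lambda_\pm$ cannot suffice. A repair along your lines: seek $\bc = \bc_R - i\bb$ with $\si'(\bc)\in\ker D_-$, i.e.\ $\p^\mu c_\mu = 0$ and $\Sigma(d\bc)=0$. Substituting the relation between $d\bb$ and ${*}\left(d\ba - \tfrac{\mPH}{2\hbar}\mathfrak{f}_+\right)$ obtained above, and using $\Sigma\circ{*} = \mp i\,\Sigma$ together with the \emph{legitimate} injectivity of $\Sigma$ on real $2$-forms, the second condition becomes the real equation $d\bc_R = \tfrac{\mPH}{2\hbar}\mathfrak{f}_+ - d\ba$. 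Its right-hand side is closed because $d\mathfrak{f}_+=0$ (a consequence of $D_+\phi_+=0$), so $\bc_R$ exists by the Poincar\'e lemma and can be corrected by an exact term, exactly as in your last step, to enforce $\p^\mu (c_R)_\mu=0$. After this transformation $\bfa_+ + \bc = \ba + \bc_R$ is real and $\chi_+$ is Hermitian; note that one then recovers $d\bfa_+ = \tfrac{\mPH}{2\hbar}\mathfrak{f}_+$ as an output of the gauge choice, not as an input.
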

Fixing a Lorentz frame, we can therefore set
\beq\label{eq:insidechi}
\chi_+ = \varphi_+\si_0 - \siV \cdot \ba_+,\qquad \chi_- = \varphi_- \si_0 + \siV \cdot \ba_-,
\eeq
for $\varphi_\pm:\Rset^{1,3}\to \Rset$ and $\ba_\pm:\Rset^{1,3}\to \Rset^3$.
Let us also denote the components of $\phi_\pm$ as in \eqref{eq:insidepsi}. Let $\psiPH$ be a solution of \refeq{eq:DirPH}. 
 Writing \eqref{eq:DirPH} out in components we then obtain
\beq\label{eq:potentials}
\p_t \varphi_\pm + \nab \cdot \ba_\pm = 0,
\qquad  
\be_\pm = \tfrac{\hbar}{\mPH } \left(-\nab \varphi_\pm - \p_t \ba_\pm\right),
\qquad 
\bb_\pm = \tfrac{\hbar}{\mPH } \nab\times \ba_\pm.
\eeq
It thus appears that the relationship of the off-diagonal terms in the photon wave function 
 $\psiPH$ to its diagonal terms is {\em formally} the same as that of {\em electromagnetic potentials} 
(in Lorenz gauge) to their corresponding electromagnetic fields\footnote{We stress that this is only a formal, mathematical correspondence.  The wave function of any quantum mechanical system of particles must be defined on the {\em configuration space} of those particles, a space whose dimension goes up with the number of particles, while classical fields such as the electric and magnetic fields are defined on {\em physical space}, which always has the same dimension regardless of how many particles are in it.}.

Equations \refeq{DIRACeqSPLITdiag} also imply that $\varphi_\pm$ and $\ba_\pm$ must satisfy the classical wave equation: 
\beq\label{eq:wavespot}
 \p_t^2 \varphi_\pm - \Delta \varphi_\pm = 0,\qquad  \p_t^2 \ba_\pm - \Delta \ba_\pm = 0.
\eeq

\subsubsection{Photon probability current}
The existence of a conserved probability current is of profound importance to the understanding of the dynamics of a quantum particle. In \cite{KTZ2017} we showed that the photon wave function $\psiPH$ described above has an intrinsically-defined conserved probability current, one which we constructed in two steps: First we showed that given any Killing field $X$ of Minkowski space, the manifestly covariant current
\beq\label{def:PHcurr}
j_X^\mu := \frac{1}{4} \tr \left(\overline{\phiPH} \ga^\mu \phiPH \ga(X)\right),
\eeq
where $\phiPH := \Pi \psiPH$, is conserved, i.e. 
\beq\label{eq:PHconserv}
\p_\mu j_X^\mu = 0.
\eeq
Here $\overline{\psi} := \ga^0 \psi^\dag \ga^0$ is the Dirac adjoint for rank-two bispinors, and $\ga(X) := \ga_\mu X^\mu$. 

The proof of \refeq{eq:PHconserv} relies on the fact that for $\psiPH$ satisfying the photon wave equation \refeq{eq:DirPH}, its projection onto the diagonal blocks $\phiPH$ will satisfy the massless Dirac equation, which is the same as the {\em massless} version of Riesz's equation \eqref{eq:MassiveSpin1}, for which one has the conservation laws \eqref{riesztes}.  We then set $j_X^\mu := \btau^{\mu}_{\nu} X^\nu$.

We next proved that when $X$ is causal  and future-directed, then so is $j_X$, i.e. $\eta(j_X,j_X)\geq 0$, and $j_X^0 \geq 0$.  We then showed that there exists a distinguished, constant (and therefore Killing) vectorfield $X$ that is completely determined by the wave function $\psiPH$ (in fact given a Cauchy surface $\Si$, it depends only on the initial value of $\psiPH$ on $\Si$.)  We defined $X$ by considering any given Lorentz frame $\{\be_{(\mu)}\}$ of Minkowski space, and defining 
\beq
\pi_{(\mu)} := \int_{\Rset^d} j_{\be_{(\mu)}}^0 dx.
\eeq
Conservation law \refeq{eq:PHconserv} then implies that the $\pi_{(\mu)}$ are constant, while the definitions of $\pi_{(\mu)}$ and $j_X$ imply that as a four-component object, $ \boldsymbol{\pi} := (\pi_{(\mu)})$ transforms correctly, i.e. like a Lorentz 4-vector. Moreover, $\bpi$ is a future-directed causal vectorfield, and is {\em typically} timelike.   We set 
$$X := \boldsymbol{\pi}/|\bpi|^2,\qquad |\bpi|^2 :=\eta(\bpi,\bpi)$$  
and defined  the photon probability current $j_{\mbox{\rm\tiny ph}}$ to be $j_X$ for this particular choice of constant vectorfield $X$. 

 Finally, we showed that $j_{\rm{ph}}$ satisfies the appropriate generalization of the Born rule, i.e., if one defines 
\beq
\rho_{\mbox{\rm\tiny ph}} := j_{\mbox{\rm\tiny ph}}^0,\quad v_{\mbox{\rm\tiny ph}}^k := \frac{j_{\mbox{\rm\tiny ph}}^k}{j_{\mbox{\rm\tiny ph}}^0}
\eeq
then one has the continuity equation
\beq
\p_t \rho_{\mbox{\rm\tiny ph}} + \p_k (\rho_{\mbox{\rm\tiny ph}} v_{\mbox{\rm\tiny ph}}^k) = 0,
\eeq
and moreover, in the Lorentz frame where\footnote{We note that there was a slip of pen in eq. (7.15) of \cite{KTZ2017}.} $X = (1/|\boldsymbol{\pi}|,0,0,0)^T$,  one has
\bna\label{eq:Bornrule}
j^0_{\mbox{\rm\tiny ph}} & = &C_\phi \tr\left(\phiPH^\dag \phiPH\right)  = C_\phi \tr\left(\phi_+^\dag \phi_+ + \phi_-^\dag \phi_-\right),\\
j^k_{\mbox{\rm\tiny ph}} & = &  C_\phi\tr\left( \phiPH^\dag \al_k \phiPH\right) =C_\phi \tr\left( \phi_+^\dag \si_k \phi_+ - \phi_-^\dag \si_k \phi_-\right).
\ena
where
\beq
C_\phi := \frac{1}{\int_{\Rset^d} \tr(\phiPH^\dag \phiPH) dx}
\eeq
is a normalization {\em constant} (i.e. it's time-independent) that depends only on the initial values of the photon wave function.
Therefore, $\rho_{\mbox{\rm\tiny ph}} := j^0_{\mbox{\rm\tiny ph}}$ is for all practical purposes a probability density, which (for normalized wave functions with $C_\phi = 1$) depends quadratically on the wave function\footnote{Note that this is the same situation as in the standard Born rule for non-relativistic quantum mechanics, namely, that the Schr\"odinger wave function needs to be normalized so that its $\rho := \psi^\dag \psi$ integrates to one, and the probability current is quadratic in the {\em normalized} wave function.}.

\section{Photons in Curved Spacetime}
Let $(\mathcal{M},g)$ be a 4-dimensional smooth, connected, orientable and time-orientable Lorentzian manifold.  We are going to additionally assume that $(\cM,g)$ has a Cauchy hypersurface $\Sigma$, i.e. a complete spacelike submanifold of codimension 1 with the property that every inextendible past-directed timelike curve in $\cM$ intersects it at one and only one point.
In other words, we assume that $(\M,g)$ is {\em globally hyperbolic.}

 At every point $p\in \cM$ the tangent space $T_p\cM$ is a copy of the Minkowski space, with $g_p$ a quadratic form of signature $(+,-,-,-)$ on it.  Therefore virtually all of the  above constructions on Minkowski space extend to the tangent spaces at every point of $\cM$.  In particular, the Clifford algebra $\Aset$ has a basis generated by $g$-dependent Dirac matrices $\gamma^\mu$ satisfying
\beq
\gamma^\mu\ga^\nu+\ga^\nu\ga^\mu = 2 g^{\mu\nu}\Id_4.
\eeq
Since $\Aset \cong M_2(\Pset)$, a general element $G \in \Aset$ can be written as
\beq\label{genG}
G = \begin{pmatrix} g_+ & h_- \\ h_+ & g_- 
\end{pmatrix},\qquad h_\pm,g_\pm \in \Pset
\eeq
The photon wave function $\psiPH$ is defined to be an $\Aset$-valued field with trace-free diagonal blocks, as in \eqref{def:PWF}.  It is assumed to satisfy the equation
\beq\label{eq:curvedDirPH}
-i\hbar \slashed{D} \psiPH + \mPH\Pi \psiPH = 0,
\eeq
where $\slashed{D}$ is the Dirac operator of $(\cM,g)$, i.e.
\beq
 \slashed{D} := \ga^\mu D_\mu.
\eeq
Here $D_\mu$ is the {\em covariant spin connection} on $\cM$, which is a lifting of the Levi-Civita connection of the metric $g$ to the spin bundle over $\cM$.  It follows that both the metric $g$ and the $\gamma^\mu$ are covariantly constant with respect to $D$ differentiation.  The projection $\Pi$ is defined as before \eqref{def:projs}.

\subsection{Conservation Laws}
Even though the domain $(\cM,g)$ is not assumed to have any symmetries, one may still wonder if there are conservation laws for \eqref{eq:curvedDirPH} associated with symmetries of the {\em fiber} $\Aset$.  Because $\Aset$ is an algebra, it acts on itself, and since $\dim_\Cset \Aset = 16$ there could in theory be up to a $16$-dimensional set of symmetry generators, with a conserved Noether current associated to each one.  It however turns out that the dynamics generated by \eqref{eq:curvedDirPH} preserves at most 5 of these symmetries. In particular  the projection term in \eqref{eq:curvedDirPH} is not invariant under the full symmetry group.  For a {\em massive} spin-one field satisfying \eqref{eq:MassiveSpin1}, where the projection $\Pi$ in \eqref{eq:DirPH} is replaced with the identity operator, there will be a larger symmetry group, and the resulting conservation laws in fact include those found by Riesz \cite{Rie1946}. We will follow the general procedure outlined in \cite{Chr2000} for finding Noetherian conservation laws that are due to the symmetries of the fiber, for Lagrangian field theories defined on sections of a vector bundle:

Let $G$ be a section of the bundle $\cB$ of rank-2 bi-spinors over the configuration Minkowski space, and let $Z$ denote the generator of the following right 
action 
of $G$ on rank-two bi-spinors $\psi \in \cB$:
\beq 
\psi_s := \psi e^{isG},\qquad s \in (-\ep,\ep).
\eeq

$Z$ is thus a section of the bundle $\mathcal{E} := \cup_{x\in \mathcal{M}} \mathcal{L}(\mathcal{B}_x,\mathcal{B}_x)$ where $\mathcal{B}_x := \pi^{-1}_{\mathcal{B},\mathcal{M}}(x)$ is the fiber in $\mathcal{B}$ over $x\in \mathcal{M}$, and
\beq 
Z \cdot \psi = \left.\frac{d}{ds}\right|_{s=0} \psi_s = i \psi G.
\eeq
Let $\epsilon = \epsilon[g]$ denote the volume form of $\cM$ with respect to the metric $g$, let the Lagrangian density function $\ell_{\mbox{\tiny ph}}$ be defined as in \eqref{Lagrangian}, except with $\p_\mu$ replaced by the spin connection $D_\mu$, and let $\cLph := \ell_{\mbox{\tiny ph}} \epsilon$ be the 4-form corresponding to it.   According to \cite{Chr2000}, the conserved Noether current corresponding to $Z$ is the 3-form $$J^{\mu\nu\la} = p^{\mu\nu\la}_a (Z\cdot\psi)^a$$ where $p = (p^{\mu\nu\la}_a)$ are 
the {\em canonical momenta} (i.e. the derivative of the Lagrangian density $\cLph$ with respect to the {\em canonical velocities} $v = (D_\mu \psi^a)$.)
Furthermore, we have (see \cite[Chap. 4, (1.148)]{Chr2000})
$$
dJ = \mathcal{L}_Z \cLph + ((D Z)\cdot\psi)\wedge p.
$$
It follows that $J$ is a conserved Noether current if the right-hand-side of the above vanishes on solutions $\psi$ of the Euler-Lagrange equations for the action with Lagrangian $\cLph$.

Let the vectorfield $j^\mu = \epsilon^{\mu\nu\la\ka}J_{\nu\la\ka}$ denote the Hodge dual of $J$ with respect to the volume form $\ep[g]$. If $dJ = 0$, it follows that $j$ is divergence-free. Thus defining
\beq\label{def:jG}
j_G^\mu := \left(\overline{\psiPH} \gamma^\mu \psiPH G\right)_S
\eeq
(using the notation \eqref{def:Spart},) we have
 \beq\label{cons}
 \nabla_\mu j_G^\mu = 0
 \eeq
  when $\psiPH$ satisfies \eqref{eq:curvedDirPH}.

From \eqref{cons} it follows that if one introduces coordinates $(t,\bs)$ in a neighborhood of the Cauchy hypersurface $\Sigma \subset \cM$ such that $t$ is a time function on that neighborhood with level sets  $\Sigma_t$ and the original Cauchy hypersurface $\Sigma = \Sigma_0$, then by the divergence theorem one has 
$$
\frac{d}{dt} \int_{\Sigma_t} j_G^0 d^3s = 0.
$$

For the photon Lagrangian \refeq{Lagrangian} we have
\beq \label{helicityCURRENT}
j^\mu = (*p)^\mu_a (Z\cdot\psiPH)^a = \frac{\p \ell_{\mbox{\tiny ph}}}{\p(D_\mu \psiPH)}(i\psiPH G) =  \tr\Big[\frac{1}{16 \pi i} \psiPHb \ga^\mu (i \psiPH G)\Big] =
 \frac{1}{16 \pi}\tr\left(\psiPHb\ga^\mu \psiPH G\right).
\eeq
We look for conditions on $G$ such that the current $j$ is conserved, i.e. 
\beq\label{conds}
\mathcal{L}_Z \cLph[\psiPH] + ((D Z)\cdot\psiPH)\wedge p = 0,
\eeq
for $\psiPH$ satisfying \eqref{eq:DirPH}.  This does not seem to be an easy task, and may require $G$ to be $\psiPH$-dependent.  If one is however content with deriving sufficient conditions on $G$, one can simplify matters considerably by assuming that $G$ is covariantly constant, i.e. 
\beq\label{covconst}
D_\mu G = 0.
\eeq    
(We will investigate nontrivial solutions of this in a moment.) This implies that $D Z = 0$.
We next compute
$$
\cL_Z\ell_{\mbox{\tiny ph}}(\psiPH) = \left.\frac{d}{ds}\right|_{s=0} \ell_{\mbox{\tiny ph}}(\psiPH e^{isG}) = I + II,
$$
where $I$ contains the contribution of the first term in the Lagrangian, the one involving $D\psiPH$, and $II$ has the contribution of the projection term.  We find
$$
I = \frac{\hbar}{16 \pi} \tr \left( - \overline{G} N + N G\right),\qquad N:= \psiPHb \ga^\mu D_\mu \psiPH - D_\mu \psiPHb \ga^\mu \psiPH.
$$
Thus we have $I = 0$ for all $\psiPH$
provided 
\beq \label{Gselfadj}\overline{G} = G
\eeq
(since $\tr( NG) = \tr (GN)$.)  Assuming \eqref{Gselfadj}, we then have
 $$
 II  = \frac{\mPH i}{8\pi} \tr\left( \psiPHb (\Pi (\psiPH G) - (\Pi\psiPH)G)\right) = \frac{\mPH i}{8\pi}\tr\left[ (\phi_-^\dag\chi_- - \chi_-^\dag \phi_- ) h_+ + (\phi_+^\dag\chi_+ - \chi_+^\dag \phi_+ ) h_- \right],
$$
where we have used \eqref{def:PWF} and \eqref{genG}.  Therefore $II=0$ for all $\psiPH$ provided we further restrict $G$ such that
\beq\label{hscalar}
 h_\pm \in \Cset.
\eeq 
Incidentally, we can also see that the Lagrangian \eqref{LagrangianMassive} of the {\em massive} version of the theory has a larger symmetry group, since the term corresponding to $II$ in that case will be zero without the need for assumption \eqref{hscalar}.  In particular, on Minkowski space and for the massive case studied by Riesz in \cite{Rie1946}, one can set $G = \ga^\nu$ for $\nu = 0,\dots,3$, in which case one recovers the conservation laws of the {\em Riesz tensor}\footnote{Riesz does not mention Noether or refer to her work in his paper.  He derives his conservation laws using the standard {\em multiplier method}, i.e. multiplying the equation through by something and integrating by parts.  This led us in \cite{KTZ2017} to erroneously state that Riesz's conservation laws are non-Noetherian. We are glad to have the opportunity here to correct this slip of mind.}:
\beq\label{riesztes}
\nab_\mu \btau^{\mu\nu} = 0,\qquad \btau^{\mu\nu} :=  \frac{1}{4}\tr( \psib \ga^\mu \psi \ga^\nu),
\eeq
whenever $\psi$ satisfies \eqref{eq:MassiveSpin1}. To connect back with what was shown in \cite{KTZ2017}, we observe that if $\psiPH$ satisfies \eqref{eq:DirPH}, then $\Pi\psiPH$ will satisfy the massless Dirac equation, which is \eqref{eq:MassiveSpin1} with $m=0$.  Thus, \eqref{riesztes} holds with $\psi = \Pi\psiPH$, and this is the conservation law used in \cite{KTZ2017} to define a probability current for the photon wave equation on Minkowski spacetime.

The assumption \eqref{covconst} allowed us to find several conserved quantities, but the question is, on a curved spacetime, does it have any nontrivial solutions? Recall that $G$, being a section of the bundle $\cB$, at every point $x\in\cM$ has an expansion in the basis \eqref{basis} for the Clifford algebra $\mathbb{A}$.  It follows that there are $p$-forms $\{G^{(p)}\}$ for $p = 0,1,2,3,4$ such that $G = G^{(0)} \Id + G^{(1)}_\mu \ga^\mu + \dots + G^{(4)} \gamma^5$.  Thus, $G^{(0)}$ is a function and $G^{(4)} = \mathcal{G}\epsilon[g]$ for some function $\mathcal{G}$ on $\cM$.  The condition \eqref{covconst} implies that $G^{(0)}$ and $\mathcal{G}$ are two complex constants.  On the other hand, on a 4-dimensional manifold with no continuous symmetries there are in general no nontrivial covariantly constant 1-forms, 2-forms, or 3-forms, so that $G^{(1)} = G^{(2)} = G^{(3)} = 0$ which, together with the other restrictions \eqref{Gselfadj} and \eqref{hscalar}, on a general spacetime leaves us only with 
\beq\label{crosshel}
G = a \Id + i b \ga^5,\qquad a,b\in \Rset
\eeq
(cf. \cite[(5.10)]{KTZ2017}.) 

If, on the other hand, we are on a Minkowski background, where every vectorfield whose components are constant (in the standard frame given by Cartesian coordinates on $\Rset^{1,3}$) is both a Killing field and is hypersurface orthogonal, \eqref{covconst} just means that the components of $G$ (in the standard Cartesian frame) are constant, and we have the full $5\Cset$- or $10\Rset$-dimensional set of conserved quantities implied by conditions \eqref{Gselfadj} and \eqref{hscalar} on the generator $G$.  Substituting in \eqref{def:jG} for $\psiPH$ in terms of its quaternionic components $\phi_\pm$ and $\chi_\pm$ as in \eqref{def:PWF}, one then obtains the explicit form of the conserved {\em charge}, i.e. the time component $j_G^0$ of the current to be
$$
j_G^0 = \frac{1}{4}\tr \left[ (\chi_-^\dag \phi_+ + \phi_-^\dag \chi_+) g_+ + 
(\chi_+^\dag \phi_- + \phi_+^\dag \chi_-) g_- + (\chi_-^\dag\chi_- + \phi_-^\dag \phi_-) h_+ + (\chi_+^\dag \chi_+ + \phi_+^\dag \phi_+) h_- \right], 
$$
where, by virtue of the assumption \eqref{Gselfadj}, we have $g_+ = g_-^\dag \in \Pset$ and $h_\pm\in \Rset$.
Thus we have 4 complex-valued and 2 real-valued conserved currents.  One of the complex conserved charges, the one corresponding to \eqref{crosshel} had already been identified in \cite{KTZ2017}.  On Minkowski space, using the decompositions \eqref{eq:insidechi} and \eqref{eq:insidepsi} it can be written as
\begin{equation}\label{def:z}
    \mathfrak{z} := \frac{1}{4}\tr (\chi_-^\dag \phi_+ + \phi_-^\dag \chi_+) =\half (\ba_+ \cdot \bb_- - \ba_- \cdot \bb_+) +\frac{i}{2} (\ba_-\cdot \be_+ - \ba_+ \cdot \be_-),
\end{equation}
while the real-valued charges correspond to $h_\pm \in \Rset$, $g_\pm = 0$, and are:
\begin{equation}\label{def:rpm}
    r_\pm := \frac{1}{4}\tr(\chi_\pm^\dag\chi_\pm + \phi_\pm^\dag \phi_\pm) = \half\left( \varphi_\pm^2 + |\ba_\pm|^2\right) + \half\left(|\be_\pm|^2 + |\bb_\pm|^2\right).
\end{equation}
In what follows we are going to focus on three more complex charges, which correspond to $g_+ = g_-^\dag = \bc \cdot \boldsymbol{\sigma}$ for an arbitrary $\bc \in \Cset^3$.  Here $\boldsymbol{\sigma}$ denotes the vector of Pauli matrices (Recall that the Pauli matrices span the subspace of imaginary quaternions, i.e. trace-free Hermitian matrices.)  We thus have a conserved, $\Cset^3$-valued charge
\beq\label{def:J}
\bJ := \frac{1}{4}\tr \left[ (\chi_-^\dag \phi_+ + \phi_-^\dag \chi_+) \boldsymbol{\sigma}\right],
\eeq
that is to say, each component of $\bJ$ is a conserved complex charge. 
Using the decompositions \eqref{eq:insidechi} and \eqref{eq:insidepsi},   one has
\beq \bJ =   -\half\left( \varphi_+\bb_- + \varphi_-\bb_+ + \ba_+\times\be_- +\ba_-\times\be_+\right) - \frac{i}{2}\left( -\varphi_+ \be_- -\varphi_-\be_+ + \ba_+\times\bb_- + \ba_-\times \bb_+\right).\nonumber
\eeq
On a flat background, the quantities $(r_\pm,\mathfrak{z},\bJ)$ together form a (real) ten dimensional set of conserved charges for the photon wave function, in the sense that their integrals over constant $t$ slices $\Si_t$ are independent of $t$.  

For the remainder of this paper, we are going to restrict ourselves to the case of a Minkowski background, and investigate the gauge-dependence of the new conserved charges we have just found.
\subsection{Gauge invariance}
Finding all these new conservation laws may seem significant, but there is a potential snag with the above procedure, namely, we know that the photon wave equation \eqref{eq:curvedDirPH} is invariant under a rather large group of {\em gauge} transformations: it was shown in \cite{KTZ2017} that if one adds to a solution $\psiPH$ of \eqref{eq:curvedDirPH} the off-diagonal part of any bispinor field $\Upsilon$ that solves the massless Dirac equation 
$$
\slashed{D} \Upsilon = 0,
$$
then the resulting wave function $\psiPH+(\Id-\Pi)\Upsilon$ is still a solution, and is physically equivalent to $\psiPH$.  One important consequence of this fact is that the off-diagonal components of $\psiPH$, namely the $\chi_\pm$ can change a lot under a gauge transformation.  For example, already on Minkowski space, $\chi_\pm$ transforms to
\beq\label{gaugetrans}
\chi_\pm' = \chi_\pm + D_\pm v_\pm,\qquad \dal v_\pm = 0
\eeq
with $v_\pm$ two arbitrary solutions of the linear classical wave equation,
while the diagonal blocks $\phi_\pm$ remain invariant.  Thus any expression in which $\chi_\pm$ explicitly appear is not likely to be gauge-invariant.  A conserved quantity that is not invariant under a gauge transformation would be of little value for physical studies.  None of the quantities we have produced so far are manifestly gauge-invariant, since they all have the $\chi_\pm$ appearing in them\footnote{Except in one space dimension ($d=1$), where $\phi_\pm \equiv 0$ and there is no gauge freedom left in $\chi_\pm$.  In that case the $r_\pm$ defined in \eqref{def:rpm} are conserved quantities that were not previously derived.}. 

The situation may however be analogous to the Einstein equations of general relativity, for which it is possible to find a set of conserved currents with densities that are not manifestly invariant under a diffeomorphism, since they depend on various components of the metric written in a specific coordinate system.  And yet in that case one is able to find {\em global} conserved quantites, the so-called ADM quantities, obtained by integrating those densities on the ``sphere at spatial infinity''.  This is in the context of {\em asymptotically flat} spacetimes, i.e. those for which outside a compact set there is a Cartesian coordinate system $(x^\mu) = (x^0,\bx)$ with the property that the metric in those coordinates approaches the Euclidean metric (often at a specified rate) as $|\bx|\to \infty$ for fixed $x^0$.  It is then shown that these integral quantities are both independent of $x^0$ and diffeomorphism invariant. For Einstein's equations the existence of these quantities is connected with the {\em asymptotic} symmetries of such spacetimes, i.e. symmetries at infinity, and there are indeed 10 such quantities, corresponding to the generators of the Poincare group of isometries of Minkowski space, aptly named {\em mass}  (or equivalently {\em energy}), {\em linear momentum, angular momentum}, and {\em center of mass integrals} (see e.g \cite{MPGR1}.)  One also notes that these quantities are obtained by integrating certain surface densities (i.e. 2-forms) on large coordinate spheres that lie in a constant time slice, and then taking the limit as the radius of the sphere goes to infinity.  

For this analogy to work for the photon wave equation on Minkowski background, we need to construct the 2-forms that are to be integrated on coordinate spheres $S_r$ lying in a constant time slice, and we need to verify that the limit as $r\to \infty$ of those integrals is both independent of $t$ and also invariant under our gauge transformations.  Since we have already found several conservation laws for the photon wave equation, we can investigate whether a conserved charge already constructed, say $J_i$, is itself a complete 3-divergence, so that its integral on a large ball of radius $r$ in $\Sigma_t$ can be turned into a surface integral on the large sphere that is the boundary of that ball, so that in the limit as $r\to \infty$ we end up with the total charge of that slice, which we already know is conserved.  We would then need to verify that even though the integrand is not manifestly gauge invariant, the limiting value of the integral is. 

In \cite{KTZ2017} we carried out this program for one of the two conserved charges we had constructed, obtaining an ADM-like quantity that we tentatively termed ``cross helicity''.  Here in this paper we would like to do the same with the vector-valued charge $\bJ\in \Cset^3$.  It turns out however that its components $J_i$ are in fact {\em not} complete divergences.  To remedy this, we will construct a separate set of conserved quantities $K_i$, with the property that the components of $\bL = \bJ - \bK$ {\em are} complete divergences, $L_i = \nabla \cdot \bY_i$, thus arriving at complex-valued ADM-like conserved quantities
$$ 
\mathbf{f}_i := \lim_{r\to \infty} \int_{B_r \subset \Sigma_t} L_i\ d^3s = \lim_{r\to\infty} \int_{S_r = \partial B_r} \bY_i \cdot \bn\ dS.
$$
We will then conclude by showing that adding a solution of the massless Dirac equation to $\psiPH$ does not change $\mathbf{f}_i$ since its contribution to the integrand falls off faster than $1/r^2$.

Finally, viewing this complex vector as $\mathbf{f} = \be + i \bb$ allows us to identify $\mathbf{f}$ with a constant bispinor $\Psi^\infty$ that is {\em self-dual}, i.e. one whose diagonal blocks satisfy $\phi_- = \phi_+^\dag$.

\subsection{Construction of boundary currents}
Boundary current is the name used by Christodoulou \cite[p. 50]{MPGR1} to denote the general procedure of obtaining ADM-like conserved quantities in theories with gauge freedom, such as Einstein's Equations.  

In our setting we need to correct the conserved charges $\bJ$ so that they become boundary currents.  We begin by recalling that on Minkowski space the quaternionic constituents of $\psiPH$ satisfy the massless wave equation \eqref{eq:waves}, since we have $$\slashed{D}^2 =\Id_4 \dal.$$ 


Now recall that $\slashed{D} = \begin{pmatrix} 0 & D_-\\ D_+ & 0\end{pmatrix}$ and thus $\slashed{D}^2 = \begin{pmatrix} D_-D_+ & 0 \\ 0 & D_+D_-  \end{pmatrix}$.  Also from \refeq{eq:masslessDir} it follows that $D_\pm D_\mp \chi_\pm = 0$.
We thus have 
\beq \label{eq:chiwave}
\dal \chi_\pm = \nab^\mu \nab_\mu \chi_\pm = 0.
\eeq

Consider now the 1-form $\kappa = K_\mu dx^\mu$ with
\beq
K_\mu := \frac{i\hbar}{\mPH } \left( \chi_-^\dag \nab_\mu \chi_+ - \nab_\mu \chi_-^\dag \chi_+\right)_S.
\eeq
From \eqref{eq:chiwave} it follows that 
\beq\label{Green}
\nab^\mu K_\mu = 0
\eeq
 i.e. $\kappa$ is another conserved current for $\psiPH$. (In the Riemannian setting \eqref{Green} is sometimes referred to as {\em the second Green identity.})
  In particular, its time component 
\begin{equation}
    \label{def:K0}
K_0 = \frac{i\hbar}{\mPH } \left( \chi_-^\dag \nab_t \chi_+ - \nab_t \chi_-^\dag \chi_+\right)_S
\end{equation}
is a conserved charge.  Clearly, the same is true for 
\beq\label{def:K}
\bK := \frac{i\hbar}{\mPH } \left[ \left( \chi_-^\dag \nab_t \chi_+ - \nab_t \chi_-^\dag \chi_+\right) \boldsymbol{\sigma}\right]_S.
\eeq
We now claim
\begin{prop}
Let $\mathfrak{z}$, $K_0$, $\bJ$, and $\bK$ be as in \eqref{def:z}, \eqref{def:K0}, \eqref{def:J} and \eqref{def:K}, respectively. Let
\beq
\mathfrak{w} := \mathfrak{z} - K_0,\qquad \bL := \bJ - \bK.
\eeq
Then the complex scalar $\mathfrak{w}$ and each component of $\bL$ are boundary currents, i.e., there exist vectorfields $\bX$ and $\bY_i$, $i=1,2,3$ such that 
\beq
\mathfrak{w} = \nab\cdot \bX,\qquad L_i = \nab\cdot \bY_i,\qquad i=1,2,3.
\eeq
\end{prop}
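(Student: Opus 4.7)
The plan is to eliminate the diagonal blocks $\phi_\pm$ in favor of the off-diagonal ones $\chi_\pm$ using the first-order consequences \eqref{eq:masslessDir} of the photon wave equation, after which only spatial derivatives should survive. Inverting \eqref{eq:masslessDir} gives $\phi_+ = \tfrac{i\hbar}{\mPH}D_-\chi_+$ and $\phi_- = \tfrac{i\hbar}{\mPH}D_+\chi_-$. Because adjoining flips the sign of the prefactor $i$, we get
\[
\mathfrak{z} \;=\; \frac{i\hbar}{4\mPH}\,\tr\bigl[\chi_-^\dag D_-\chi_+ \;-\; (D_+\chi_-)^\dag\chi_+\bigr],
\]
and $J_i$ has the same bracket with an extra $\sigma_i$ inserted at the right. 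The appearance of a \emph{difference} rather than a sum is the algebraic feature that will make the time-derivative contributions lock onto $K_0$ and $\bK$ exactly.

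Next, I would substitute $D_\pm = \partial_0 \pm \sigma_k\partial_k$ and exploit the Hermiticity of the Pauli matrices to separate time and space contributions:
\[
\chi_-^\dag D_-\chi_+ \;-\; (D_+\chi_-)^\dag\chi_+ \;=\; \bigl(\chi_-^\dag\partial_0\chi_+ - \partial_0\chi_-^\dag\chi_+\bigr) \;-\; \bigl(\chi_-^\dag\sigma_k\partial_k\chi_+ + \partial_k\chi_-^\dag\sigma_k\chi_+\bigr).
\]
The first parenthesis, under the trace, is precisely the integrand defining $K_0$ (and, with a trailing $\sigma_i$, $K_i$); the second parenthesis is a perfect spatial divergence $\partial_k\tr(\chi_-^\dag\sigma_k\chi_+)$ by the Leibniz rule. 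Subtracting $K_0$ (resp. $K_i$) therefore leaves
\[
\mathfrak{w} \;=\; -\,\frac{i\hbar}{4\mPH}\,\nabla\cdot\tr\bigl(\chi_-^\dag\boldsymbol{\sigma}\chi_+\bigr), \qquad L_i \;=\; -\,\frac{i\hbar}{4\mPH}\,\nabla\cdot\tr\bigl(\chi_-^\dag\boldsymbol{\sigma}\chi_+\sigma_i\bigr),
\]
so that one may take $\bX = -\tfrac{i\hbar}{4\mPH}\tr(\chi_-^\dag\boldsymbol{\sigma}\chi_+)$ and $\bY_i = -\tfrac{i\hbar}{4\mPH}\tr(\chi_-^\dag\boldsymbol{\sigma}\chi_+\sigma_i)$. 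The vector case is a verbatim repeat of the scalar one because the trailing $\sigma_i$ commutes through the outer $\partial_k$.

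The only real hazards are of a bookkeeping nature. One must (i) use the scalar-part convention $(\cdot)_S = \tfrac{1}{4}\tr$ consistently so that the $\partial_0$-integrands in $K_0,K_i$ match the $\partial_0$ parts extracted from $\mathfrak{z},\bJ$ with no leftover numerical factor; (ii) keep careful track of the sign flip $(i\hbar/\mPH)^{\ast} = -i\hbar/\mPH$ when adjoining $\phi_\mp^\dag$, which is precisely what converts the sum in \eqref{def:z}--\eqref{def:J} into the difference in the bracket above and causes the time-derivative pieces to cancel against, rather than double, $K_0$ and $\bK$; and (iii) note that the second-order wave equation \eqref{eq:chiwave} for $\chi_\pm$ is not invoked at all---only the first-order identities \eqref{eq:masslessDir} are needed---so the argument reduces to elementary algebra followed by a single application of the Leibniz rule.
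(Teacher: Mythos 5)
Your proposal is correct and follows essentially the same route as the paper: both substitute $\phi_+=\tfrac{i\hbar}{\mPH}D_-\chi_+$, $\phi_-=\tfrac{i\hbar}{\mPH}D_+\chi_-$ from \eqref{eq:masslessDir}, use the adjoint sign flip so the time-derivative pieces reproduce $K_0$ (resp.\ $\bK$) exactly, and identify the remaining spatial part as the divergence of $\bX=-\tfrac{i\hbar}{\mPH}(\chi_-^\dag\siV\chi_+)_S$ (resp.\ $\bY_i$ with a trailing $\si_i$), which matches \eqref{def:X} and \eqref{def:Yi} once the convention $(\cdot)_S=\tfrac14\tr$ is unwound. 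The only cosmetic difference is that the paper manipulates $\chi_-^\dag\phi_+$ alone and recognizes $-\phi_-^\dag\chi_+$ at the end of the chain \eqref{calc}, whereas you substitute both blocks at the outset; your observation that only the first-order identities (not \eqref{eq:chiwave}) are needed for this proposition is also accurate.
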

\begin{proof}
Let $D_\pm := \nab_t \Id \pm \boldsymbol{\sigma}\cdot \nab_{\bs}$ as before.  The quaternionic components of $\psiPH$ satisfy equations \eqref{eq:masslessDir}.  Thus we can write
\begin{eqnarray}\label{calc}
\chi_-^\dag \phi_+ & = & \frac{i\hbar}{\mPH} \chi_-^\dag D_- \chi_+ \\
& = & \frac{i\hbar}{\mPH} \left( \chi_-^\dag \nab_t \chi_+ - \chi_-^\dag ( \siV\cdot \nab) \chi_+\right) \nonumber\\
& = & \frac{i\hbar}{\mPH} \left( \chi_-^\dag \nab_t \chi_+ -  \nab\cdot (\chi_-^\dag\siV \chi_+) + (( \siV\cdot\nab)\chi_-)^\dag\chi_+\right) \nonumber\\
& = & \frac{i\hbar}{\mPH} \left( \chi_-^\dag \nab_t \chi_+ -  \nab\cdot (\chi_-^\dag\siV \chi_+) - (\nab_t\chi_-)^\dag \chi_+ + (D_+\chi_-)^\dag\chi_+ \right)\nonumber \\
& = &  \frac{i\hbar}{\mPH} \left(\chi_-^\dag \nab_t \chi_+ - (\nab_t\chi_-)^\dag \chi_+  -  \nab\cdot (\chi_-^\dag\siV \chi_+) \right) -  \phi_-^\dag \chi_+ \nonumber
\end{eqnarray}
It thus follows that
$$
(\chi_-^\dag \phi_+ + \phi_-^\dag \chi_+)_S - K_0 = - \frac{i\hbar}{\mPH}  (\nab\cdot (\chi_-^\dag\siV \chi_+))_S = - \frac{i\hbar}{\mPH} \nab\cdot  (\chi_-^\dag \siV\chi_+)_S
$$
which is a complete divergence. Thus defining
\begin{equation}\label{def:X}
    \bX := -\frac{i\hbar}{\mPH} \left(\chi_-^\dag\siV \chi_+\right)_S 
\end{equation}
and recalling \eqref{def:z} we obtain
\begin{equation}
   \mathfrak{w} := \mathfrak{z} - K_0 = \nab\cdot\bX.
\end{equation}
Let $\bc\in \Cset^3$ be a fixed vector.  Repeating the calculation in \eqref{calc} with $\siV\cdot \bc$ multiplied on the right gives us the desired vector version, i.e.
$$
\bJ - \bK = - \frac{i\hbar}{\mPH} \nab\cdot  \left[ (\chi_-^\dag\siV \chi_+) \siV\cdot\bc\right]_S.
$$
Thus setting
\beq\label{def:Yi}
 \bY_k := - \frac{i\hbar}{\mPH}\left[ (\chi_-^\dag\siV \chi_+) \si_k \right]_S,\qquad k=1,2,3,
\eeq
establishes the claim.
\end{proof}
Let $S_r$ denote the coordinate sphere of radius $r$ lying in the constant-time slice 
$\Sigma_t$ of the globally hyperbolic, asymptotically flat spacetime $(M,g)$.  Let $\psiPH$ be a photon wave function, i.e. a rank two bispinor satisfying \eqref{eq:curvedDirPH}.  Let $\phi_\pm,\chi_\pm$ denote the quaternionic components of $\psiPH$, as in \eqref{def:PWF}, and let the vectorfields $\bX$ and $\bY_j$ be defined as in \eqref{def:X} and \eqref{def:Yi} respectively.  Let
\beq\label{def:fi}
\mu(\psiPH):= \lim_{r\to \infty} \int_{S_r} \bX \cdot \bn dS,\qquad  \mathbf{f}_i(\psiPH) := \lim_{r\to \infty} \int_{S_r} \bY_i \cdot \bn dS
\eeq
define the complex scalar $\muV$ and components of the complex vector $\beff = \be+i\bb$.  It follows that $\muV$ and $\mathbf{f}_i$ are independent of $t$, and therefore constant.  What remains to show is that, even though $\bX$ and $\bY_i$ are not manifestly gauge-invariant, $\muV$ and $\beff$ are.  
\begin{rem}
    Using the decomposition \eqref{eq:insidechi} with respect to the frame $\{\frac{\p}{\p x^\mu}\}$ associated with the Cartesian coordinates on Minkowski space, we obtain 
    \begin{eqnarray}\label{def:eb}
    \mu & = & \frac{\hbar}{2 \mPH} \lim_{r\to \infty} \int_{S_r}\left( \ba_-\times\ba_+ +i(\varphi_- \ba_+ - \varphi_+ \ba_- )\right)\cdot \bn\ dS_r \\
        \bb & = & \frac{\hbar}{2 \mPH} \lim_{r\to \infty} \int_{S_r} (\ba_-\cdot\bn)\ba_+ + (\ba_+\cdot \bn)\ba_- - (\ba_-\cdot \ba_+ + \varphi_-\varphi_+)\bn\  dS_r \\
        \be & = & \frac{\hbar}{2 \mPH} \lim_{r\to \infty} \int_{S_r} (\varphi_+ \ba_- + \varphi_- \ba_+) \times \bn\ dS_r 
    \end{eqnarray}
\end{rem}

\subsection{Example of this construction in Minkowski space}
Here we show how to construct photon wave functions $\psiPH$ with a finite, nonzero bispinor ``at infinity" in Minkowski space.  Using gauge-invariance, we are going to look for solutions to (\ref{eq:potentials}--\ref{eq:wavespot}) that are not just in Lorenz gauge but also in Coulomb gauge.  This implies that without loss of generality $\varphi_\pm \equiv 0$, and yields the following equations:
\begin{equation}\label{eq:coul}
    \dal \ba_\pm = 0,\qquad \nabla\cdot \ba_\pm = 0.
\end{equation}
It is also clear that for the quantities defined in \eqref{def:eb} to be finite, one needs 
\begin{equation}\label{cond:asymp}
    |\ba_\pm| \sim \frac{c_\pm}{|\bs|}\qquad \mbox{as } |\bs| \to \infty.
\end{equation}
Equations \eqref{eq:coul} can be solved exactly using separation of variables, and a complete basis for ``outgoing" solutions with finite $L^2$ norm for $\be_\pm,\bb_\pm$ was found by Marchal \cite{Marchal}.  These have the form $\ba = \sum_{n=1}^\infty \ba_n$ with
\begin{equation}\label{eq:an}
    \ba_n(t,\bs) = \frac{\bP_n(\bs)}{|\bs|^n}g_n(t,|\bs|) + \frac{\bP_{n-2}(\bs)}{|\bs|^{n-2}}h_n(t,|\bs|),
\end{equation}
where $\bP_{-1} = \bP_0 \equiv 0$ and for $n\geq 1$, $\bP_n, \bP_{n-2} \in \Rset^3$ are two vector-valued homogeneous harmonic polynomials of degree $n$ and $n-2$ respectively, with the property that
\begin{equation}
    \bs \cdot \left(\bP_n(\bs) + |\bs|^2 \bP_{n-2}(\bs)\right) = 0,
\end{equation}
\begin{equation}\label{def:gnhn}
    g_n(t,r) := (-r)^n \left(\frac{1}{r}\frac{\p}{\p r}\right)^n \frac{f_n(t-r)}{r},\qquad h_n(t,r) := (-r)^n \left(\frac{1}{r}\frac{\p}{\p r}\right)^n \frac{f_n''(t-r)}{r},
\end{equation}
with $f_n\in C^{n+2}(\Rset)$ arbitrary. To ensure that the $L^2$ condition is satisfied, Marchal took $f_n$ to be compactly supported in an interval not containing zero, but this is clearly not necessary.  We can see that for the asymptotic condition \eqref{cond:asymp} as well as the $L^2$ condition on $\be_\pm,\bb_\pm$ to be satisfied, it is enough that $f_n \in C^{n+2}(\Rset_+)$ with
\begin{equation}\label{cond:f}
    f_n(r) \sim \left\{\begin{array}{ll} r^{n+1} & r \to 0^+\\ r^n & r \to \infty. \end{array}\right.
\end{equation}
Marchal showed that these solutions, when appropriately normalized, form a complete orthonormal basis for the outgoing solutions of \eqref{eq:coul} (For incoming solutions, 
change $t-r$ to $t+r$ in \eqref{def:gnhn}.)

Armed with this decomposition we construct an example of a photon wave function with a nontrivial bispinor at infinity by letting $\ba_+ = \ba_1$ and $\ba_- = \ba_2$, as defined in \eqref{eq:an}, with the choices 
$$\bP_1(\bs) = ( 0, -s_3, s_2),\qquad \bP_2(\bs) = ( -s_2 s_3, s_1 s_3, 0)$$
and with $f_1,f_2$ satisfying the conditions \eqref{cond:f}.  It then follows that $\be = \mathbf{0}$ by the Coulomb gauge condition and, since our choice makes $\ba_+\cdot\bn = \ba_-\cdot \bn = 0$,
$$
\bb = \frac{-\hbar}{2 \mPH} \lim_{r\to \infty} \int_{S_r} (\ba_+\cdot \ba_-)\bn\  dS_r = \frac{\hbar}{2 \mPH} \lim_{r\to \infty} g_1(0,r)g_2(0,r) r^2 \int_{S_1} (s_1 s_3^2)\begin{pmatrix}
    s_1 \\ s_2 \\ s_3
\end{pmatrix}dS_1 = \frac{C\hbar}{ \mPH} \begin{pmatrix} 1 \\ 0 \\ 0 \end{pmatrix}
$$
for some nonzero numerical constant $C$, since by their definitions \eqref{def:gnhn}, $g_1 ,g_2 \sim \frac{1}{r}$ as $r\to \infty$.

\subsection{Gauge-invariance of the asymptotic bispinor}
\begin{thm}
Let 
$
\Upsilon %
$
be a solution of the massless Dirac equation $\slashed{D}\Upsilon = 0$ with smooth compactly supported data, and let 
$$
\psiPH' := \psiPH + (1-\Pi)\Upsilon
$$
be the gauge-transformed version of $\psiPH$. Let $\psi_\pm$ denote the off-diagonal blocks of $\Upsilon$. It follows that 
$$
\phi_\pm' = \phi_\pm,\qquad\chi'_\pm = \chi_\pm +\psi_\pm
,\qquad D_\mp \psi_\pm = 0.
$$
We then have
$$
\mu(\psiPH') = \mu(\psiPH),\qquad\mathbf{f}_i(\psiPH') = \mathbf{f}_i(\psiPH).
$$
\end{thm}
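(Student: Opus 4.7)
The strategy is to exploit finite propagation speed for the massless Dirac equation $\slashed{D}\Upsilon = 0$ on Minkowski space. First I would record the consequences of $\slashed{D}\Upsilon=0$ for the components of $\Upsilon$: writing $\Upsilon = \begin{pmatrix} g_+ & \psi_-\\ \psi_+ & g_-\end{pmatrix}$ and using the block form of $\slashed{D}$ in \eqref{def:Dpm}, one reads off $D_- \psi_+ = 0$, $D_+ \psi_- = 0$, together with $D_-g_- = D_+ g_+ = 0$. Since $(1-\Pi)\Upsilon = \begin{pmatrix} 0 & \psi_-\\ \psi_+ & 0\end{pmatrix}$, the gauge transformation leaves the diagonal blocks alone, $\phi'_\pm = \phi_\pm$, and shifts the off-diagonal blocks by $\chi'_\pm = \chi_\pm + \psi_\pm$, as stated. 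This recovers the form of the gauge transformation \eqref{gaugetrans} in the sense that $\psi_\pm = D_\pm v_\pm$ for appropriate $v_\pm$ solving the wave equation.

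Next I would invoke the finite speed of propagation for the massless Dirac operator on Minkowski space. Because each component of $\Upsilon$ satisfies $\dal \Upsilon = 0$ (since $\slashed{D}^2 = \Id_4 \dal$), and because the initial data for $\Upsilon$ is smooth and compactly supported, there is some $R_0 > 0$ such that $\supp \Upsilon(0,\cdot) \subset \{|\bs| \le R_0\}$. By the standard domain-of-dependence result for the wave equation, at any time $t$ one has $\supp \Upsilon(t,\cdot) \subset \{|\bs| \le R_0 + |t|\}$, and in particular $\supp \psi_\pm(t,\cdot) \subset \{|\bs|\le R(t)\}$ with $R(t) := R_0 + |t|$.

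Now I would compute the change in the relevant boundary-current vectorfields. Expanding
\begin{equation}
\chi'^{\,\dag}_- \siV \chi'_+ = \chi_-^\dag \siV \chi_+ + \chi_-^\dag \siV \psi_+ + \psi_-^\dag \siV \chi_+ + \psi_-^\dag \siV \psi_+,
\end{equation}
and applying the scalar-part operation termwise, the differences $\bX' - \bX$ and $\bY'_k - \bY_k$ (with $\bX,\bY_k$ as in \eqref{def:X}, \eqref{def:Yi}) are linear combinations of expressions each of which contains at least one factor of $\psi_+$ or $\psi_-$. Consequently, at every fixed $t$, both $\bX' - \bX$ and $\bY'_k - \bY_k$ are supported in $\{|\bs| \le R(t)\}$.

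The conclusion is then immediate: for any fixed $t$ and any $r > R(t)$, the integrands $(\bX' - \bX)\cdot \bn$ and $(\bY'_k - \bY_k)\cdot \bn$ vanish identically on the coordinate sphere $S_r \subset \Si_t$, so
\begin{equation}
\int_{S_r}(\bX' - \bX)\cdot \bn\, dS = 0, \qquad \int_{S_r}(\bY'_k - \bY_k)\cdot \bn\, dS = 0,
\end{equation}
for all such $r$. Taking $r \to \infty$ with $t$ fixed, the definitions \eqref{def:fi} yield $\mu(\psiPH') = \mu(\psiPH)$ and $\mathbf{f}_i(\psiPH') = \mathbf{f}_i(\psiPH)$. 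No genuine obstacle appears, since the compact-support hypothesis together with finite propagation speed makes the difference of integrands vanish pointwise, not merely decay faster than $1/r^2$; the latter (weaker) statement is what would be needed if one wished to relax the hypothesis to data decaying sufficiently rapidly at infinity, and could be recovered by replacing the finite-speed argument with an explicit decay estimate for solutions of $\slashed{D}\Upsilon = 0$.
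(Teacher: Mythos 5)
Your proof is correct, and the essential mechanism is the same as the paper's: smooth compactly supported data plus finite propagation speed for the wave equation forces the gauge perturbation $\psi_\pm$ to remain compactly supported at every time $t$, so it cannot contribute to integrals over spheres $S_r$ as $r\to\infty$. Where you differ is in how you establish that $\psi_\pm$ obey a wave equation with compactly supported data: you square the Dirac operator, using $\slashed{D}^2=\Id_4\dal$ together with $D_-\psi_+=D_+\psi_-=0$, and apply the domain-of-dependence theorem directly to the spinor components. The paper instead passes through the spinor--form correspondence, writes $\psi_+=\si'(\al_+)$, $\psi_-=\si(\al_-)$ with $d\al_\pm=\delta\al_\pm=0$, invokes the Poincar\'e lemma to produce scalar potentials $h_\pm$ with $\al_\pm=dh_\pm$ and $\dal h_\pm=0$, and then applies finite speed of propagation to $h_\pm$. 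Your route is shorter and arguably cleaner: it avoids having to worry about whether compact support of the data for $\al_\pm$ transfers to the potentials $h_\pm$ (which requires a normalization in the unbounded component of the complement of the support), and it makes explicit the termwise expansion of $\bX'-\bX$ and $\bY_k'-\bY_k$ showing each term carries a factor of $\psi_\pm$ — a step the paper leaves implicit. Your closing remark, that the compact-support hypothesis could be relaxed to data with sufficient decay by replacing finite speed with a pointwise decay estimate, matches the paper's own remark following the proof about relaxing to $\dot H^1$-type data by density.
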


\begin{proof}
Recall that there exist two real-valued 2-forms $\mathfrak{f}_\pm$ and two real-valued 1-forms $\mathfrak{a}_\pm$ such that
$$ \phi_+ =\Sigma(\mathfrak{f}_+),\quad \phi_- = \Sigma'(\mathfrak{f}_-),\quad \chi_+ = \sigma'(\mathfrak{a}_+),\quad \chi_- = \sigma(\mathfrak{a}_-).$$
Moreover, if the above $\psiPH$ satisfies our photon wave equation \eqref{eq:DirPH}, then we will have
$$
d \mathfrak{a}_\pm = \frac{\mPH}{\hbar}\mathfrak{f}_\pm,\qquad \delta \mathfrak{a}_\pm = 0,\qquad d \mathfrak{f}_\pm = 0,\qquad \delta \mathfrak{f}_\pm = 0.
$$
Here $\delta = *d*$ is the co-differential (or divergence) operator, and $*$ is the Hodge star operator with respect to the spacetime metric $g$.

By the same token, if $\Upsilon = \begin{pmatrix}
    0 & \psi_-\\ \psi_+ & 0 
\end{pmatrix}$ satisfies the massless Dirac equation $\slashed{D}\Upsilon=0$, it follows that there are real-valued 1-forms $\al_\pm$ such that
$\psi_+ = \si'(\al_+)$, $\psi_- = \si(\al_-)$, and we have
$$
d\al_\pm= 0, \qquad \delta \al_\pm = 0.
$$
Thus by the Poincare lemma, there are functions $h_\pm$ such that $\al_\pm = d h_\pm$, and moreover we have 
\begin{equation}\label{waveeq}
    \dal_g h_\pm = 0,
\end{equation} i.e. the $h_\pm$ solve the classical wave equation on the background spacetime $(\cM,g)$, which we have assumed to be Minkowski space again.
Thus the Cauchy problem for \eqref{waveeq} with smooth initial data prescribed on $\Si_0$ is uniquely solvable, 
 the solution is smooth, and satisfies the domain of dependence property.  In particular, 
if the data is compactly supported, so would be the solution at all times.  In that case, since $\psi_\pm$ are compactly supported, they won't make a contribution to the boundary currents $\mu$ and $\beff_i$.  
\end{proof}
We note that using standard density arguments, the compact support assumption on the data of $h_\pm$ can be relaxed, to include data belonging to appropriate Sobolev spaces (e.g. $\dot{H}^1$) without changing the conclusion. This will allow the dynamics of the wave function to be well-defined on $L^2$-based Hilbert spaces.

\section{Acknowledgments} 
We thank Prof. Premala Chandra for wonderful conversations, and reminiscences about her father Harish-Chandra.  We are grateful to the anonymous referee for a careful reading of our manuscript and for the truly helpful suggestions on how to improve it.

\appendix
\bigskip
\noindent{\bf \LARGE Appendix}

\section{Brief History of the Photon Wave Equation}\vspace{-.2truecm}
As we remarked earlier, the massive version of our photon wave equation \eqref{eq:DirPH}, i.e. where the projection operator is replaced by the identity, was already written down by M. Riesz a long time ago \cite{Rie1946}:
\beq\label{rie}
-i\hbar \slashed{D} \psi + m \psi = 0,\qquad \psi\in M_4(\Cset).
\eeq
Riesz however did not address the question of the physical significance of $\psi$, beyond the fact that it was a $4\times 4$ matrix-valued field.  His main point was to either obtain new conservation laws for the Dirac equation for the electron, or see the known ones in a new light.  He did this by first finding the conservation laws that hold for \eqref{rie}, and then using the fact that a 4-component spin-half Dirac wave function $\Psi_{el}$ can be viewed as the only nonzero column of a $4\times 4$ matrix $\psi$, and that the set of all such matrices forms a {\em minimal left ideal} in the space of all {\em Clifford numbers} (his term for elements of $\mbox{\bf Cl}(4)$), to restrict those conservation laws back to electron wavefunctions $\Psi_{el}$.    In particular, Riesz seems to be unaware of the fact that a Clifford-number-valued field can represent a spin-one object, and that the equation \eqref{rie} has the interpretation of a relativistic wave equation for a massive spin-one particle.  Indeed, \eqref{rie} is equivalent to Kemmer's equation \cite{Kem1939} for the same type of particle: 
\begin{equation}\label{eq:Kemmer}
\beta^\mu \p_\mu \psi + m \psi = 0,
\end{equation} with $\psi \in \Cset^{10}$ and matrices  $\beta_\mu$ satisfying the relations
$\beta_\mu\beta_\nu\beta_\rho + \beta_\rho\beta_\nu\beta_\mu =  g_{\rho\nu}\beta_\mu + g_{\mu\nu}\beta_\rho$, first found by Duffin \cite{Duf1938}.  

At roughly the same time as Riesz, Harish-Chandra was working with the Kemmer equation, trying to find what the corresponding equation would be for a {\em massless} spin-one particle, and realizing that setting the mass parameter $m$ in \eqref{eq:Kemmer} to zero yields only trivial solutions.  He showed in \cite{HC1946} that instead one must replace the mass term with a {\em projection operator} onto a 6-dimensional subspace of $\Cset^{10}$.  He then went on and found many conservation laws for the ``massless Kemmer equation" he had obtained, but not the ones that would correspond to what Riesz had found. 

With hindsight, for someone who would be aware of these two strands of thought, it seems an easy move to combine them, and come up both with the photon wave equation \eqref{eq:DirPH} as well as its key conserved current \eqref{def:PHcurr}, long before we accomplished this task in 2017, but apparently no one did (although many came close, see \cite{KTZ2017} for details.)  This may be partly because Riesz's paper appeared in a very obscure publication, and also he was generally ignored by physicists who were his contemporaries\footnote{See for example G\aa rding's account of Riesz's years in Lund \cite{GardingHISTORY}.}. He did not seem to succeed in convincing them of the significance of Clifford algebras for modern physics. Riesz's students also do not seem to have picked up on this particular aspect of his work\footnote{During his long career in Stockholm and then Lund, M. Riesz had relatively few students, although this apparent lack of quantity was more than made up by quality, since those students became extremely well-known mathematicians: E. Hille, L. G\aa rding, and L. H\"ormander, just to name three.  Hille in particular is relevant to our story, since he came to the US and trained a whole generation of first-rate analysts at Yale, among them I. Segal, who did the same at Chicago and then at MIT.  W. Strauss was Segal's first student at MIT, and continued this tradition at Brown, where his first student was R. Glassey.  Glassey's first student at Indiana was Tom Sideris.  The second author of this paper is another academic grandchild of Strauss.}.

Meanwhile, two years after Harish-Chandra (who was Dirac's assistant in Cambridge) published his work on Kemmer's equation, he accompanied Dirac to Princeton, 
where he switched to doing pure mathematics\footnote{When asked for the reason for this switch, he is quoted to have said that now he could believe in what he was doing! \cite{Bor2010}} (something for which apparently Dirac had a hard time forgiving him), eventually becoming one of the greatest algebraists of the 20th century, so he did not pursue this question further, either.  In the ensuing decades, the empirical successes of Quantum Electrodynamics and Quantum Field Theory removed the urgency for physicists of addressing foundational questions such as what a photon is and how  it interacts with the electron, so the search for a quantum-{\em mechanical} description of a photon as a particle was relegated to the theoretical backwaters. Various pronouncements by prominent physicists regarding either the futility or the {\em impossibility} of this task, did not help matters either (see \cite[p. 56]{SSL1974}, \cite[p. 3]{WeinbergBOOKqft}.)

\bibliographystyle{plain}

\end{document}